\newtheorem{thm}{Theorem}[section]
\newtheorem{claim}[thm]{Claim}
\newtheorem{lem}[thm]{Lemma}
\newtheorem{prop}[thm]{Proposition}
\theoremstyle{definition}
\newtheorem{defn}[thm]{Definition}
\theoremstyle{remark}
\newtheorem{rem}[thm]{Remark}
\numberwithin{equation}{section}
\newcommand{\norm}[1]{\left\Vert#1\right\Vert}
\newcommand{\set}[1]{\left\{#1\right\}}
\newcommand{\Real}{\mathbb R}
\newcommand{\Natural}{\mathbb N}
\newcommand{\eps}{\varepsilon}
\newcommand{\such}{\ | \ }
\newcommand{\prob}{\mathbb{P}}
\newcommand{\qprob}{\mathbb{Q}}
\newcommand{\expec}{\mathbb{E}}
\newcommand{\indic}{\mathbb{I}}
\newcommand{\filtration}{\mathbf{F} = \pare{\mathcal{F}_t}_{t \in [0, T]}}
\newcommand{\F}{\mathcal{F}}
\newcommand{\X}{\mathcal{X}}
\newcommand{\Y}{\mathcal{Y}}
\newcommand{\ud}{\mathrm d}
\newcommand{\inner}[2]{\left \langle #1 , #2 \right \rangle}
\newcommand{\conv}{\mathrm{conv}}
\newcommand{\pare}[1]{\left(#1\right)}
\newcommand{\dbra}[1]{[\kern-0.15em[ #1 ]\kern-0.15em]}
\newcommand{\dbraco}[1]{[\kern-0.15em[ #1 [\kern-0.15em[}
\newcommand{\K}{\mathcal{K}}
\newcommand{\eL}{\mathcal{L}}
\newcommand{\prices}{\mathcal{P}}
\newcommand{\D}{\mathcal{D}}
\newcommand{\Q}{\mathcal{Q}}
\newcommand{\B}{\mathcal{B}}
\newcommand{\Lb}{\mathbf{L}}
\newcommand{\oV}{\overline{V}}
\newcommand{\tV}{\tilde{V}}
\newcommand{\nemph}[1]{#1}
\newcommand{\seq}[1]{( #1_{n} )_{n\in\Natural}}
\newcommand{\toe}{\stackrel{e}{\to}}
\newcommand{\tog}{\stackrel{g}{\to}}
\begin{document}

\title[Stability of Utility Maximization with Random
Endowment]{Stability of the Utility Maximization problem with random
endowment in incomplete markets}
\author{Constantinos Kardaras}%
\address{Constantinos Kardaras, Mathematics and Statistics Department,
  Boston University, 111 Cummington Street, Boston, MA 02215, USA.}%
\email{kardaras@bu.edu}%

\author{Gordan {\v{Z}}itkovi{\'{c}}}%
\address{Gordan {\v{Z}}itkovi{\'{c}}, Department of Mathematics,
  University of Texas at Austin, 1 University Station, C1200, Austin,
  TX 78712, USA}%
\email{gordanz@math.utexas.edu}%

\subjclass[2000]{91B16, 91B28} \keywords{ convex analysis, convex
duality, illiquid assets, incomplete markets, mathematical finance,
random endowment, semimartingales, stability, utility maximization,
utility-based prices,  well-posed problems.
}%

\date{\today}%

\maketitle

%
%
%
%
\begin{abstract}
We perform a stability analysis for the
utility maximization problem in a general semimartingale model where
both liquid and illiquid assets (random endowments) are present. Small
misspecifications of preferences (as modeled via expected utility), as well as views of the world or the market model (as modeled via
subjective probabilities) are considered. Simple sufficient conditions
are given for the problem to be well-posed, in the sense the optimal
wealth and the marginal utility-based prices are continuous functionals of preferences and probabilistic views.
\end{abstract}

\setcounter{section}{-1}


\section{Introduction}

\subsection{Expected Utility Maximization}
A mathematically sound, aesthetically pleasing and computationally
tractable description of optimal behavior of rational
economic agents under uncertainty comes from the \textsl{expected
  utility theory}: given a random outcome $X$ (e.g., a terminal
wealth, or a consumption stream) an agent's numerical
assessment of the ``satisfaction'' that $X$ provides is given by $\expec
[U(X)]$, where $U$ is a real-valued function, and
$\expec$ is the expectation corresponding to either
a \emph{physically} estimated
probability measure, or, in other circumstances, the
\emph{subjective} agent's view of the world.

Despite the criticism it received, expected utility theory has grown
widely popular and successful, mainly because it delivers
\emph{quantitative} results and, in some cases, even closed-form
solutions.  Among the seminal contributions in this vein in the field
of mathematical finance we single out \cite{sam69} (dealing with a
simple discrete-time Markovian model) and  \cite{merton69} (where
the problem of optimal investment in a continuous-time Markovian
framework is explicitly solved).  A more general approach that avoids
Markovian assumptions for the asset-price processes is the so-called martingale method.  In complete financial markets, this
methodology was introduced in \cite{pliska86} and later developed in
\cite{coxhuang89}, \cite{coxhuang91} and \cite{kls87}. For incomplete
financial models and continuous-time diffusion models important early
progress was made in \cite{hepear91} and \cite{klsx91}. \cite{ks99}
and \cite{ks03} contain a very complete picture of the solution of the
problem of expected utility maximization from terminal wealth in a
general semimartingale incomplete model when the wealth process
remains positive.

\subsection{Stability Analysis}

With problems of existence and uniqueness of optimal investment
virtually settled (at least for utilities defined on the positive
real line), interest in the \emph{stability analysis} (for the
solution of the problem of expected utility maximization under
perturbations of various initial conditions) has recently developed.
The problem of convergence of prices of illiquid assets, when the
prices of the liquid assets converge, was tackled in
\cite{hubsch98}.  In \cite{jounap04}, the authors look at an It\^o
process model and a convergent sequence of utility functions (i.e.,
misspecifications of a ``true'' utility function).  Convergence of
utilities is also considered in \cite{caras05}, but in a general
discrete-time setting. Continuity (and smoothness) properties with
respect to perturbations in the initial wealth and the quantities of
the illiquid assets have been studied in \cite{KraSir06}. \cite{Lar06} deals with utility-function misspecifications in
continuous-time models with general continuous-path semimartingale price processes, and illustrates the theory with applications to
certain  widely-used models. A different viewpoint is taken in
\cite{larzit06}. Therein \emph{model-}, rather than utility
misspecifications are studied: the asset price process $S^\lambda$
---  a continuous-path semimartingale --- is indexed by its
\textsl{market-price-of-risk} $\lambda$ (the parameter which is the
source of model misspecification).

In view of the previously listed works, one can argue that there has
been no unified treatment of the problem of stability under
simultaneous perturbations of both the utility functions and the
probability measures under which the expectations are taken.  The
aim of the present paper is to give insight into this problem in a
general semimartingale model, where the economic agent is,
additionally, endowed with a random payoff (illiquid assets).
Moreover, rather than merely providing a common platform for most of
the existing results, we generalize them in several directions. A
simple sufficient (and, in some cases, ``very close'' to necessary)
condition for stability is given, and several illuminating examples
dealing with various special cases are provided. We remark that in
this paper we deal with utility functions defined only on the
positive real line, since the theory of utility maximization with
random endowments for this case has been thoroughly understood. It
would be interesting to pursue whether a treatment of stability for
utility functions defined on the whole real line is possible, in the
spirit of the recent developments of \cite{BiFritGraHur07}, but we
are not dealing with this case in the present work.  We also note that the results appearing here have qualitative nature and constitute a zeroth order approach to the problem. The next natural step would be a first-order study, quantifying the infinitesimal change of value functions, the optimal wealth, as well as utility indifference prices. This would be accomplished by a study of the differentiability of the latter outputs with respect to smooth changes of the preferences and the agent's subjective views. We leave this important task as a future research project.

The structure of the paper is simple. After this Introduction,
section 1 describes the problem and states the main result, while
all the proofs are given in section 2.

\section{Problem Formulation and Statement
of the Main Result}

\subsection{Description of the modeling framework}

We start with a brief reproduction of the set-up and notation
introduced in \cite{hugkra04}, where the authors are concerned with
the problem of utility maximization with random endowment in
incomplete semimartingale markets.

\subsubsection{The financial market}
Let $(\Omega, \F, \mathbf{F}, \prob)$ be filtered probability space,
where the filtration $\filtration$ satisfies the \textsl{usual
conditions} of right continuity and $\prob$-completeness. The time
horizon $T > 0$ is fixed and constant. This assumption is in place
for simplicity only --- $T$ could be replaced by a finite stopping
time, as is the case in \cite{hugkra04} upon which we base our
analysis.

We consider a financial market with $d$ \emph{liquid} assets,
modeled by stochastic processes $S=(S^i)_{i = 1, \ldots, d}$. There
is also a ``baseline'' asset $S^0$ which plays the role of a
num\'eraire --- this amounts to the standard assumption $S^0\equiv
1$. The process $S$ is assumed to be a locally bounded
$\Real^d$-valued semimartingale (see \cite{delsch94} for the
economic justification of this essentially necessary assumption). Finally, in relation to the notion of absence of arbitrage, we posit
the existence of at least one \textsl{equivalent martingale
measure}, i.e., a probability measure $\qprob \sim \prob$ that makes
(each component of) $S$ a local martingale (see  \cite{delsch94} and
\cite{delsch98} for more information).

\subsubsection{Investment opportunities}
An initial capital $x > 0$ and a choice of an investment strategy
$H$ (assumed to be $d$-dimensional, predictable and $S$-integrable)
result in a wealth process $X= X^{x,H} = x + H \cdot S$, where
``$\cdot$'' denotes \emph{vector} stochastic integration.  In order
to avoid so-called doubling strategies, we restrict the class of
investment strategies in a standard way: the wealth process
$X=X^{x,H}$ is called \textsl{admissible} if
\[\prob[X_t \geq 0, \ \forall \ 0 \leq t \leq T] = 1.\]
An admissible wealth process $X$
is called {\em maximal} if  for
each $X' \in \X$ with $\prob[X'_T \geq X_T] = 1$ and $X'_0=X_0$, we
necessarily have $X = X'$, a.s. The
class of admissible wealth processes (starting from the initial
wealth $X_0=x$) is denoted by $\X(x)$. The union $\bigcup_{x>0}
\X(x)$ is denoted by $\X$.

On the dual side,
we  define the class of \textsl{separating measures} by
\[
\Q := \{ \qprob \ | \ \qprob \sim \prob, \textrm{ and } X \text{ is
} \qprob \textrm{-supermartingale for all } X \in \X \}.
\]
Thanks to the assumptions of no-arbitrage and local boundedness,
$\Q$ coincides with the set of all equivalent (local) martingale
measures, and is, therefore, non-empty. For future use, we restate
the (already imposed) assumption of \textsl{No Free
  Lunch with Vanishing Risk} as
\begin{equation} \label{ass: NFLVR} \tag{NFLVR}
\Q \neq \emptyset.
\end{equation}

\subsubsection{Illiquid assets}
Together with the liquid (traded) assets $S$, we assume the existence of $N$
\emph{illiquid} assets whose values at time $T$ are represented by
random variables $f^1, \ldots, f^N$. We allow for the case $N = 0$,
in which all assets are liquid. From the outset, the agents hold
some positions in illiquid assets, but, due to their illiquidity,
 they are not able to trade in them
(until the time $T$, at which all $N$ of them mature).
The only regularity
assumption on the illiquid assets is that they
can be super- and sub- replicated using the traded assets
$S$; in other words,  we assume (with the convention that
$\sum_{j=1}^0 \cdot =0$)
\begin{equation} \label{ass: S-REP} \tag{S-REP}
\X' := \Big \{ X \in \X \such X_T \geq
\sum_{j=1}^N |f^j|,
\ \prob
\textrm{-a.s.} \Big \} \neq \emptyset.
\end{equation}

To avoid trivial technical complications, we
 assume that the illiquid assets $f^1,\dots,f^N$ are
non-redundant when $N \geq 1$, in the sense that no linear
combination $\sum_{k=1}^N \alpha_k f^k$ --- where not all of the
$\alpha_k$'s are zero --- is replicable in the sense that there
exists a wealth process $X \in \X$ such that both $X$ and $-X$ are
maximal and $X_T = \sum_{k=1}^N \alpha_k f^k$. It is a standard
result (see for example Lemma 7 in \cite{hugkra04}) that this is
equivalent to saying that the set of \textsl{arbitrage-free prices}
for $f$ defined as
\begin{equation} \label{ass: N-TRAD} \tag{N-TRAD} \prices(f) := \{
  (\expec^{\qprob} [f^1], \dots, \expec^{\qprob} [f^N]) \, | \, \qprob
  \in \Q \} \text{ is an open set when $N>0$.}
\end{equation}
If \eqref{ass: N-TRAD} did not hold, we could always retain a
minimal set of (linear combinations) of the illiquid claims, and
regard all the others merely as outcomes of trading strategies using
the liquid assets only. It should become clear that \eqref{ass:
N-TRAD} is not needed for the results of the paper to hold and this
is why it is not assumed in our main Theorem \ref{thm: main} below.

Under the assumptions (\ref{ass: NFLVR}) and (\ref{ass: S-REP}), the class
\[
\Q' := \{ \qprob \in \Q \ | \  X \textrm{ is a } \qprob
\textrm{-uniformly integrable martingale for some } X \in \X' \}.
\]
can be shown to be non-empty. This follows from the fact that $\X'$
contains at least one maximal element $X$. The existence of a measure
$\qprob \in \Q$ that makes this maximal wealth process a uniformly
integrable martingale was established in \cite{delsch97}. Assumption
\eqref{ass: S-REP} implies that $f^j \in \Lb^1(\qprob)$ for all
$j=1,\dots, N$, $\qprob \in \Q'$.

\subsubsection{Acceptability requirements}
The notion of acceptability, related to that of maximality
introduced above,  plays a natural role when non-bounded random endowment is
present, as is thoroughly explained in \cite{delsch97} and \cite{hugkra04}. We say that a process
$X=X^{x,H}=x+H\cdot S$, with $H$ predictable and $S$-integrable, is
\emph{acceptable}, if there exists a maximal wealth process
$\breve{X} \in \X$ such that $X+\breve{X} \in \X$. Acceptability requires the
shortfall of a trading strategy to be bounded by a maximal wealth
process, rather than a constant, as in the case of the
admissibility requirements.

\subsubsection{The utility-maximization problem}
Starting with  initial wealth $x$ and $q^j$
units of each of the non-traded assets $f^j$ in the portfolio,
an economic agent can
invest in the market and achieve any of the wealths in the collection
\[
\X (x, q) := \{ X \equiv x + H \cdot S \such X \textrm{ is
acceptable and } X_T + \langle q, f \rangle \geq 0 \}.
\]
where $q \equiv (q^1, \ldots, q^N)$ and $\langle \cdot , \cdot
\rangle$ denotes inner product in the Euclidean space $\Real^N$ (if
$N=0$, the variable $q$ is absent). The agent's goal is to choose $X
\in \X(x, q)$ in such a way as to maximize $\expec^\prob [U(X_T +
\langle q, f \rangle)]$, where $\expec^\prob$ is used to
denote expectation under $\prob$, and the utility $U$ 
is a function mapping $(0, \infty)$ into $\Real$, which is strictly
increasing and strictly concave, continuously differentiable and
satisfies the Inada conditions: $U'(0 +) = \infty$, $U'(\infty) =
0$. The above \textsl{utility maximization problem} is considered
for all $(x, q) \in \K$, where $\K$ is the \emph{interior} of the
convex cone $\{(x, q) \ | \ \X(x, q) \neq \emptyset \}\subseteq
\Real^{N+1}$. In the liquid case $N=0$, (\ref{ass: NFLVR}) implies
that $\K=(0,\infty)=\mathrm{Int} [0,\infty)$. In the general case,
its geometry  depends on the interplay of the liquid and illiquid
assets. In Lemma 1 of \cite{hugkra04} it is shown that the assumption
\eqref{ass: S-REP} of sub- and super- replicability of the illiquid
assets is equivalent to $(x, 0) \in \K$, for all $x
> 0$ (and always, trivially, satisfied when $N=0$).

It is useful to consider the \textsl{value function} or
\textsl{indirect utility} of this problem as a function of both the
initial wealth $x$ that can be distributed in the liquid assets, and
the positions $q\in\Real^N$ held in the illiquid assets; thus, we
define the \textsl{indirect utility}
\begin{equation} \label{eq: primal value function}
u(x, q) := \sup_{X \in \X(x,q)} \expec^\prob [U(X_T + \langle q, f
\rangle)]
\end{equation}
for $(x, q) \in \K$. The specification of the indirect utility as a
function of both the initial capital and the holdings in the
illiquid assets is convenient if one wants to introduce
utility-based prices.

\subsubsection{Marginal utility-based prices}
For an  agent with an initial wealth $x$ and an initial position $q$
in $N\geq 1$ illiquid assets, a \textsl{marginal utility-based
price} for $f=(f^1,\dots,f^N)$ is a vector $p \equiv p(f; x,
q)\in\Real^N$ such that if $f$ were liquid and traded at prices $p$,
the utility-maximizing agent would be indifferent to changing
his/her positions in $f$. In more concrete terms, we must have $u(x,
q) \ \geq \ u(\tilde{x}, \tilde{q})$, for all $(\tilde x, \tilde q)
\in \K$ with $x +  \inner{q}{p} = \tilde{x} + \inner{\tilde q }{p}$.
In \cite{HugKraSch05}, the authors have shown that marginal
utility-based prices always exist, but do not, surprisingly, have to
be unique. More precisely, the {\em set} of marginal utility-based
prices for $f$ (with initial positions $x$ and $q$) is
\begin{equation} \label{eq: util indif prices}
\prices (f; x,q; U) \ := \ \{y^{-1} r \ | \ (y, r) \in
\partial u(x, q) \},
\end{equation}
where $\partial u(x, q)$ is the superdifferential of the concave
function $u$ at $(x, q) \in \K$.

\subsubsection{The dual problem}
In order to solve the \textsl{primal} (utility maximization)
problem,
it is useful to consider the related \textsl{dual problem}
\begin{equation} \label{eq: dual value function}
v(y, r) := \inf_{Y \in \Y(y, r)} \expec^\prob [V(Y_T)],
\end{equation}
where $V(y):= \sup_{x > 0} \{ U(x) - xy \}$ is the Legendre-Fenchel
transform of $U(\cdot)$ and $\Y(y, r)$ is defined to be the class of
all non-negative c\` adl\` ag processes $Y$ such that $Y_0 = y$,
$YX$ is a supermartingale for all $X \in \X$ and such that
$\expec[Y_T (X_T + \langle q, f \rangle)] \leq x y + \langle q, r
\rangle$ holds for all $(x, q) \in \K$ and $X \in \X(x, q)$. The
obvious simplifications apply when $N=0$. The dual problem
\eqref{eq: dual value function} is defined for all $(y,r) \in \eL$,
where we set $\eL := \mathrm{ri} (-\K)^{\circ}$, with
$(-\K)^{\circ}=\left\{ (y, r) \in \Real^{N+1} \ | \ xy + \langle q,
r \rangle\ \geq 0, \ \forall\, (x, q) \in \K \right\}$. In words,
$\eL$ is the \textsl{relative interior} of the \textsl{polar cone}
$(-\K)^{\circ}$ of $-\K$. We have the set equality
\begin{equation} \label{eq: arbitrage-free prices }
\prices(f) = \{ p \in \Real^N \such (1, p) \in \eL \}
\end{equation}
(see equation (9), p.~850 in \cite{hugkra04}) with $\prices(f)$
defined in \eqref{ass: N-TRAD} to be the set of
\textsl{arbitrage-free} prices for $f$. For future reference, for
any $p \in \prices(f)$ we set
\[
\Q'(p) := \{ \qprob \in \Q' \ | \ \expec^\qprob [f] = p \},
\]
where $\expec^\qprob [f]:=(\expec^\qprob [f^1], \dots, \expec^\qprob
[f^N])\in \Real^N$ and $\Q'(p) = \Q'$ if $N=0$.
The authors of \cite{hugkra04} show that $\Q'(p)
\neq \emptyset$ for \emph{all} $p \in \prices$.

\subsubsection{A theorem of Huggonier and Kramkov}
We conclude this section by stating a version of the main theorem
of \cite{hugkra04}, which  will be referred to throughout the sequel.

\begin{thm}[Huggonier and Kramkov (2004)] \label{thm: hugkra}
Suppose that $v(y,0) < \infty$ for all $y > 0$. Then, the functions
$u$ and $v$ are finitely valued on $\K$ and $\eL$, respectively, and are
conjugate to each other:
\[
v(y, r) = \sup_{(x, q) \in \K} \{ u(x, q) - xy - \langle q, r
\rangle \},
\]
\[
u(x, q) = \inf_{(y, r) \in \eL} \{ v(y, r) + xy + \langle q, r
\rangle \}.
\]
Furthermore, for each $(x,q)\in\K$ we have $\partial u (x, q)
\subseteq \eL$; actually,
\begin{equation}%
\label{equ:form-of-superdifferential}
    \begin{split}
  \partial u(x,q)=\begin{cases}
\{y\}\times R,& N\geq 1, \\ \{y\},& N=0
\end{cases}
\end{split}
\end{equation}
for some $y=y(x,q)\in (0,\infty)$ and some compact and convex
set $R=R(x,q)\subseteq\Real^N$.
The
optimal solutions $\hat{X}(x, q)$ and $\hat{Y}(y, r)$ for the primal and
dual problems exist for all $(x, q) \in \K$ and $(y, r) \in
\eL$. Moreover,
if $(y, r) \in \partial u(x, q)$, we have the $\prob$-a.s.~equality  $\hat{Y}_T (y, r) = U'(\hat{X}_T (x, q) + \langle q, f
\rangle)$

\end{thm}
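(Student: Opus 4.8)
The strategy is to recast both the primal and dual problems as optimization over convex subsets of terminal random variables and then to invoke the abstract convex-duality machinery of Kramkov and Schachermayer (\cite{ks99}, \cite{ks03}). First I would introduce, for each $(x,q) \in \K$, the primal feasible set
\[
\mathcal{C}(x,q) := \set{g \in \Lb^0_+ \such 0 \leq g \leq X_T + \inner{q}{f} \text{ for some } X \in \X(x,q)},
\]
and, dually, the set of terminal values $Y_T$ of processes $Y \in \Y(y,r)$. The key preliminary is a polarity (bipolar) relation: $g \in \mathcal{C}(x,q)$ if and only if $\expec[g\, Y_T] \leq xy + \inner{q}{r}$ for every $(y,r) \in \eL$ and every $Y \in \Y(y,r)$. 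One direction is immediate from the supermartingale/budget constraint defining $\Y(y,r)$; the reverse follows from the no-arbitrage hypothesis \eqref{ass: NFLVR} together with a bipolar theorem for convex subsets of $\Lb^0_+$, adapted to accommodate the random endowment $\inner{q}{f}$ and the acceptability requirement.

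With the polarity in hand, I would verify the hypotheses of the abstract duality theorem. The standing assumption $v(y,0) < \infty$ for all $y > 0$ guarantees that the dual value function is proper, which in the Kramkov--Schachermayer framework is precisely what is needed to dispense with any explicit asymptotic-elasticity requirement on $U$. From the abstract theorem one obtains simultaneously: finiteness of $u$ on $\K$ and of $v$ on $\eL$; the two biconjugacy identities in the statement (each value function being the Legendre--Fenchel transform of the other); and the existence of a unique primal optimizer $\hat X(x,q)$, by strict concavity of $U$, together with a dual optimizer $\hat Y(y,r)$, by strict convexity of $V$.

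Next I would establish the structure of the superdifferential. Since $u$ is concave and finite on the open set $\K$, it is locally Lipschitz there, so $\partial u(x,q)$ is a nonempty, compact, convex subset of $\Real^{N+1}$. The inclusion $\partial u(x,q) \subseteq \eL$ is read off from the identity expressing $u(x,q)$ as an infimum over $\eL$: that infimum is attained exactly at the supergradients, and these therefore lie in the effective domain $\eL$ of $v$. To see that the first coordinate is a singleton $\set{y}$, I would show that for fixed $q$ the map $x \mapsto u(x,q)$ is continuously differentiable; this is the classical consequence of the strict convexity of $V$, which forces uniqueness of the dual optimizer and hence, via an envelope argument, differentiability of the primal value in the wealth direction, with $\partial_x u(x,q)$ equal to the common first coordinate $y$ of every supergradient. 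The remaining coordinates then range over a compact convex set $R = R(x,q) \subseteq \Real^N$, giving the asserted product form (which collapses to $\set{y}$ when $N = 0$).

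Finally, the pointwise first-order condition follows from the Fenchel equality: for conjugate $U, V$ one has $U(g) - gh \leq V(h)$ with equality if and only if $h = U'(g)$. When $(y,r) \in \partial u(x,q)$, complementary slackness forces the budget constraint $\expec[\hat Y_T(\hat X_T + \inner{q}{f})] = xy + \inner{q}{r}$ to bind, so integrating the Fenchel inequality against the optimizers yields equality in expectation; this upgrades to the $\prob$-a.s.~identity $\hat Y_T(y,r) = U'(\hat X_T(x,q) + \inner{q}{f})$. The main obstacle throughout is the opening step: establishing the bipolar relation in the presence of a genuinely random (non-constant) endowment and the acceptability constraint, since the feasible sets $\mathcal{C}(x,q)$ are no longer scaled copies of a single cone and the geometry of $\K$ couples the wealth and endowment directions nontrivially.
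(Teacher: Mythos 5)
This theorem is not proved in the paper at all: it is stated verbatim as an imported result, ``a version of the main theorem of \cite{hugkra04},'' and the paper's Section 2 only uses it as a black box. So there is no in-paper proof to compare against; the relevant benchmark is the argument in \cite{hugkra04} itself. Measured against that, your sketch follows essentially the same route: abstract primal/dual sets of terminal random variables, a bipolar relation adapted to the random endowment and the acceptability constraint, and then the Kramkov--Schachermayer duality machinery of \cite{ks99}, \cite{ks03} under the minimal finiteness condition $v(y,0)<\infty$. You also correctly identify the bipolar step as the genuinely hard part. As an outline this is sound.

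Two places are thinner than they read. First, the inclusion $\partial u(x,q)\subseteq \eL$ is a statement about the \emph{relative interior} of the polar cone $(-\K)^{\circ}$, not merely its effective-domain version; ``the infimum defining $u$ is attained at supergradients, which lie in the domain of $v$'' only places them in the closed polar cone, and ruling out the relative boundary requires a separate argument (in \cite{hugkra04} this rests on the Inada conditions and the structure of $\Q'(p)$). Second, differentiability of $x\mapsto u(x,q)$ does not follow from strict convexity of $V$ alone; one needs strict convexity of $y\mapsto v(y,r)$ along the relevant directions, i.e.\ that distinct values of $y$ produce genuinely distinct dual optimizers, which is an additional lemma rather than an immediate envelope consequence. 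Neither issue is fatal --- both are handled in \cite{hugkra04} --- but in a self-contained write-up they would need to be argued, not asserted.
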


\begin{rem}
For all $(x,q)\in\K$, the superdifferential $\partial u (x, q)$ is a
compact and convex subset of $\eL$. Moreover, since $y>0$ for
$(y,r)\in\partial u(x,q)$, equation \eqref{eq: util indif prices}
and the set-equality \eqref{eq: arbitrage-free prices } imply that
$\prices (f; x,q; U)$ is a convex and compact subset of $\prices
(f)$ --- in other words, marginal utility-based prices are
arbitrage-free prices.
\end{rem}

We note some further properties of the utility and value functions
above. It follows from the properties of  convex conjugation that
the function $V$ is strictly convex, continuously differentiable and
strictly decreasing on its natural domain. The value functions $u$
and $v$ are conjugates of each other
--- $u(\cdot,q)$ is strictly concave, strictly increasing, while
$v(\cdot,q)$ is strictly convex and strictly decreasing. Both
$u(\cdot,q)$ and $v(\cdot,q)$ are continuously differentiable.
Considered as functions of the second argument $u(x,\cdot)$ and
$v(y,\cdot)$ are continuous on the interiors of effective domains.

\subsection{Stability analysis}

Having described the utility-maximization setting of \cite{hugkra04},
we turn to the central question of the present paper: \emph{what are
  the consequences of model and/or preference misspecification for
   the optimal investment problem (as described in the previous
  section)?
}

\subsubsection{Problem formulation}
In mathematical terms, we can ask whether the mapping that takes as
inputs a utility function $U$ and a probability measure $\prob$ and
produces the optimal wealth process and the set of utility-based
prices for contingent claims (the illiquid assets) is continuous. Of
course, appropriate topologies on the sets of
the probability measures, utility functions, terminal wealth processes
and prices need to be introduced.

Focusing on the special case of the logarithmic utility in a complete
It\^o-process market, the authors of \cite{larzit06} determine
certain   conditions on
topologies governing the convergence of stock-price processes, which are
\emph{necessary} for convergence in probability on the space of the
terminal wealth processes in \emph{all} models. A similar
approach in our case obviates the need for, at least, the
following set of assumptions:
\begin{itemize}
\item[i)] the class of probability measures is endowed with the
  topology of convergence in \emph{total variation}, and
  \item[ii)] the space of utility functions is topologized by \emph{pointwise}
    convergence.
\end{itemize}

\begin{rem}\label{rem:modes}
In general, the topology of pointwise convergence lacks the
operational property of metrizability. However, when restricted to a
class of concave functions --- such as utility functions ---  it
becomes equivalent to the metrizable topology of uniform convergence
on compact sets. From the economic point of view, such convergence
is natural because --- despite its apparent coarseness ---
 it implies pointwise (and locally uniform) convergence
of derivatives (marginal utilities), and thus, convergence in the
local Sobolev space $W^{1,\infty}_{loc}$. It is implied, for
example, by the convergence of (absolute or relative) risk aversions
under the appropriate normalization. The pointwise convergence of
utility functions is the most used notion of convergence for utility
functions in economic literature (see \cite{jounap04} or
\cite{caras05} in the financial framework, or \cite{Bac83} for a
more general discussion and relation to other, less used notions of
convergence).
\end{rem}

In the sequel, we consider two sequences $(\prob_n)_{n \in
\Natural}$ and $(U_n)_{n \in \Natural}$ of probability measures and
utilities, together with the
``limiting'' probability measure $\prob$
and utility function $U$. These will  always be assumed to satisfy the
following (equivalency and) convergence condition:
\begin{equation} \label{ass: CONV} \tag{CONV}
\forall\, n \in \Natural,\,\, \prob_n \sim \prob,\  \lim_{n \to \infty}
\prob_n = \prob \textrm{ in total variation and} \ \lim_{n \to \infty}
U_n = U \textrm{ pointwise}.
\end{equation}

\begin{rem}
Some aspects of the approach of
\cite{larzit06} can be recovered in our setting
when the utility function $U$ is kept constant and there are no
illiquid assets ($N=0$). In order to see that, recall that in
\cite{larzit06}, the authors consider a general (right-continuous
and complete) filtration $\filtration$, on which a one-dimensional
\nemph{continuous} local martingale $M$ is defined. They vary the
model by considering a sequence $(\lambda_n)_{n\in\Natural}$ of
{market-price-of-risk} processes, giving rise to a sequence of
stock-price processes
\begin{equation}
\nonumber
\begin{split}
      \ud S^{\lambda_n}(t)= \lambda_n(t)\, \ud \langle M \rangle(t)+ \ud M(t).
\end{split}
\end{equation}
They study the convergence of the outputs of the
utility-maximization problems in the sequence $(S^{\lambda_n}
)_{n\in\Natural}$ of models, while keeping the ``physical'' measure
$\prob$ fixed.

In our framework, we keep the functional representation of
the models constant (as the same functions mapping
$\Omega$ into the appropriate co-domain), but the measure $\prob$
changes. To see the connection, let $S$ be a continuous-path semimartingale. Then, $\ud S(t) = \lambda (t)\, \ud \langle M  \rangle(t)+ \ud M (t)$, where $M$ is a local $\prob$-martingale. For $n \in \Natural$ let $\prob_n \sim \prob$; then, Girsanov's theorem enables us to write $\ud S(t) = \lambda_n(t)\, \ud \langle M \rangle(t)+ \ud M_n(t)$, where $M_n$ is a local $\prob_n$-martingale with $\langle M_n \rangle = \langle M \rangle$. It is straightforward to check that $\lim_{n \to \infty}
\prob_n = \prob$ in total variation implies $\lim_{n \to \infty} \int_0^T \|
\lambda_n(t) - \lambda(t)  \|^2 \ud \langle M \rangle (t) = 0$ in $\Lb^0$. Conversely, the latter convergence, coupled with requiring that $M$ has the predictable representation property with respect to the filtration $\mathbf{F}$ and some uniform integrability conditions, imply that $\lim_{n \to \infty}
\prob_n = \prob$ in total variation.
\end{rem}

\begin{rem}

The equivalence of all probability measures $(\prob_n)_{n \in
\Natural}$ to $\prob$ as required by \eqref{ass: CONV} is a rather
strong condition --- in particular, it pins down the quadratic
variation of $S$ and this means that model misspecifications with
respect to volatility in simple It\^o-process models cannot be
dealt.  Our choice to impose such a requirement nevertheless is
based on the following two observations:
\begin{enumerate}
\item Stability in the general (non-equivalent) case can only be
  studied in the distributional sense; equivalence allows one to
  talk about convergence in probability. Such problems do not arise
  when one only considers numerical objects, such as  prices of
  contingent claims for example.
\item The structure of the dual sets (the sets of equivalent
  martingale measures) in the
  limit and that in the pre-limit models differ greatly in
  typical non-equivalent cases. This puts a severe limitation on the
  applicability of our method.
\end{enumerate}
In special cases, however, there exists a simple way
around the equivalence assumption, based on the
observation that the subject of
importance is not the asset-price vector $S$ itself, but the
collection of all wealth processes that are to be used in the
utility maximization problem. This simple observation allows, for example, treatment of stochastic volatility models. The example below illustrates the
general principle of how the equivalence requirement can be
``avoided'':

Suppose that under $\prob$ we have the dynamics $\ud S^i_t / S^i_t
=: \ud R^i_t = \mu^i_t \ud t + \sum_{j = 1}^d \sigma_t^{i j} \ud
W^j$ for $i = 1, \ldots, d$, where $W = (W^j)_{1 \leq j \leq d}$ is
an $\mathbf{F}$-Brownian motion, $\mu = (\mu^i)_{1 \leq i \leq d}$
and $\sigma = (\sigma^{ij})_{1 \leq i \leq d, 1 \leq j \leq d}$ are
$\mathbf{F}$-predictable and $\sigma$ is assumed to be
non-singular-valued. It follows that for the returns vector $R =
(R^i)_{1 \leq i \leq d}$ we can write $\ud R_t = \sigma_t (\lambda_t
\ud t + \ud W_t)$, where $\lambda := \sigma^{-1} \mu$ is the Sharpe
ratio. The non-singularity of $\sigma$ implies that the set of wealth processes obtained by trading in $S$ is the same as the one obtained by trading in assets with returns given by $\widetilde{R} =
(\widetilde{R}^i)_{1 \leq i \leq d}$ satisfying $\ud \widetilde{R}_t = \lambda_t
\ud t + \ud W_t$. This trick allows to get rid of the dependance on $\sigma$.

Suppose now we want to check the effect of changing both
$\sigma$ and $\mu$
--- for example we want to see what will happen if $(\mu^{(n)},
\sigma^{(n)})$ converge to $(\mu, \sigma)$ in some sense. Define
$\lambda^{(n)} = (\sigma^{(n)})^{-1} \mu^{(n)}$ (assume that each
$\sigma^{(n)}$ is non-singular-valued) and $\prob_n$ via the density
(assuming that the exponential local martingale below is uniformly
integrable):
\[
\frac{\ud \prob_n}{\ud \prob} \Big |_{\F_T} = \exp \Big( \int_0^T
(\lambda^{(n)}_t - \lambda_t ) \ud W_t - \frac{1}{2} \int_0^T \|
\lambda^{(n)}_t - \lambda_t  \|^2 \ud t  \Big)
\]
As long as $\lim_{n \to \infty} \int_0^T \| \lambda^{(n)}_t -
\lambda_t  \|^2 \ud t = 0$ (in probability) we have $\lim_{n \to
\infty} \prob_n = \prob$ in total variation. Define new return processes $R^{(n)}$ via $\ud R^{(n)}_t  =
\sigma_t^{(n)} \big( \lambda^{(n)}_t \ud t + \ud W^{(n)}_t \big) = \mu^{(n)} \ud t + \sigma_t^{(n)} \ud W^{(n)}_t$, where $W^{(n)} := W -
\int_0^\cdot (\lambda^{(n)}_t - \lambda_t ) \ud t$ is
$\prob_n$-Brownian motion. The induced set of wealth
processes by investing in asset-prices with returns $R^{(n)}$ is the same as the one obtained if the asset-prices had returns $\widetilde{R}^{(n)}$ that satisfied $\ud \widetilde{R}^{(n)}_t  =
\lambda^{(n)}_t \ud t + \ud W^{(n)}_t = \ud \widetilde{R}_t$, which in turn is the same as the \emph{original} set of wealth processes obtained by investing in $S$. In this indirect way, we can study changes of both drift and volatility in the model, while keeping our framework of only changing the probability measure and not the asset prices.
\end{rem}

\subsubsection{A uniform-integrability condition}
Unfortunately, the modes of convergence in \eqref{ass: CONV} are not
strong enough for stability: \cite{larzit06} contains  a simple
example. In the setting of their example, $T=1$ and there exists one
liquid asset $S$ whose $\prob$-dynamics ($\prob$ being the
``limiting measure'') is given by $\ud S_t = S_t \ud W_t$. $W$ is a
$\prob$-Brownian motion, and the filtration is the (augmentation of
the) one generated by $W$. The sequence $(\prob_n)_{n\in\Natural}$,
of measures is defined via $\ud \prob_n / \ud \prob = \varphi_n
(W_1)$, where $(\varphi_n)_{n \in
  \Natural}$ is a sequence of positive real functions
with $\lim_{n \to \infty} \varphi_n = 1$, pointwise. The utility function involved --- in their treatment only the model
changes and the utility is fixed  --- is \emph{unbounded from above}
(and is, in fact, a simple power function). What the authors of
\cite{larzit06} show is that convergence of the optimal wealth
processes in probability might fail --- convergence of $(\prob_n)_{n
\in \Natural}$ to $\prob$ in total variation is simply not enough.
Moreover, their choice of the functions $\varphi_n$ is such that
$\ud \prob_n/\ud \prob\to 1$ in $\Lb^2$, and a simple variation of
their argument may be used to show that, in fact, the $\Lb^p$
convergence will not be universally sufficient, no matter how large
$p\in (1,\infty)$ is chosen. The appropriate strengthening of the
requirement (\ref{ass: CONV}), as shown by \cite{larzit06}, is  the
replacement of the classical $\Lb^p$ spaces by the Orlicz spaces
related to the utility function $U$. In the present setting, where
the variation in the model, as well as in the utility function, has
to be taken into account, such a replacement leads to the following
condition (in which $V_n^+ (x) := \max \{ V_n(x), 0 \}$):
\begin{equation} \label{ass: UI} \tag{UI}
\forall\, p\in\prices,\ \exists\, \qprob\in \Q'(p),\ \forall\,
y>0,\ \ \left( \frac{\ud \prob_n}{\ud \prob} \ V_n^+ \Big( y \frac{\ud
\qprob}{\ud \prob_n} \Big) \right)_{n \in \Natural}\text{ is
$\prob$-uniformly integrable.}
\end{equation}

\subsubsection{On condition (\ref{ass: UI})}
The following special cases  illustrate the meaning and restrictiveness of
 the condition (\ref{ass: UI}).
 The convergence requirement
(\ref{ass: CONV}) is  assumed throughout.
\begin{enumerate}
\item It has been shown in \cite{larzit06} that (the appropriate
  version of) the condition
  \eqref{ass: UI} is both sufficient \emph{and necessary} in
  complete financial markets. In the incomplete case, and still in
  the setting of \cite{larzit06}, it is ``close to'' being necessary ---
  the gap arising because of the technical issues stemming from the
  fact that the dual minimizers do {\em not} have to be countably-additive measures.
\item When there are no
  illiquid assets ($N=0$), the set $\prices$ has no meaning and any
  martingale measure $\qprob$ can be used in (UI). Also, in the case when the market is complete, the set $\prices$ is a singleton and the unique equivalent martingale measure $\qprob$ has to be used in (UI).
\item
The \eqref{ass: UI}
condition  is immediately satisfied if the sequence $(U_n)_{n \in \Natural}$
is uniformly bounded from above. Indeed, in that case we have
$\sup_{n \in \Natural} V^+_n \leq C$ for some $C > 0$ (the uniform
upper bound on the utilities) and the sequence $(\ud \prob_n / \ud
\prob)_{n \in \Natural}$ is $\prob$-uniformly integrable in view of
its $\Lb^1 (\prob)$ convergence.
\item
If the previous example corresponds to the duality between $\Lb^{\infty}$ and
$\Lb^{1}$, the present one deals with the case of $\Lb^{\hat{p}}$ and $\Lb^{\hat{q}}$,
${\hat{p}}^{-1}+{\hat{q}}^{-1}=1$. Indeed, assume the  following
conditions:
\begin{enumerate}
\item there exist constants $c>0$,
$d\in\Real$ and
$0<\alpha<1$ (the case $\alpha=0$ corresponds to the logarithmic
function, and can be treated in a similar fashion)
such that
 $U_n(x) \leq c x^{\alpha} + d$, for all $n\in\Natural$,
\item the sequence
 $(\ud \prob_n/\ud \prob)_{n \in \Natural}$ is bounded in  $\Lb^{\hat{p}}$, for some
 ${\hat{p}}>(1-\alpha)^{-1}$, and
\item for each $p\in\prices$ there exists
$\qprob_p\in {\mathcal Q}'(p)$
  such that $(\ud \qprob_p / \ud \prob)^{-1} \in \Lb^{\hat{q}}$, where
  we set ${\hat{q}} := \tfrac{{\hat{p}}\alpha}{{\hat{p}}(1-\alpha)-1}$. Note
  that this requirement is not as strong as it seems, as it is closely
  related to the finiteness in the dual problem.
\end{enumerate}
 Then, for all $y>0$,
\begin{equation}%
\label{equ:estimate}
    \begin{split}
      V_n(y)=\sup_{x>0} \{ U_n(x)-xy \}\leq \sup_{x>0} [c x^{\alpha} +d -
      xy]\leq  C y^{- \tfrac{\alpha}{1 - \alpha}} + D,
    \end{split}
\end{equation}
where $C,D\in\Real$ are positive constants.
 For arbitrary but fixed $p\in\prices$ and $y>0$ define
$\gamma := {\hat{q}}{\hat{p}}(1-\alpha)({\hat{q}} +
{\hat{p}}\alpha)^{-1}$ so that $1<\gamma<{\hat{p}}$, where
${\hat{q}}>0$ has been defined above. H\"older's inequality (applied
in the last inequality below) and the estimate (\ref{equ:estimate})
imply that
\begin{eqnarray*}
  \expec\left[ \left( \frac{\ud \prob_n}{\ud \prob} \ V_n^+ \Big(
          y \frac{\ud \qprob_p}{\ud \prob_n} \Big)
        \right)^{\gamma}\right] &\leq& \expec \left[ \left( C \frac{\ud
            \prob_n}{\ud \prob} \left( y \frac{\ud \qprob_p}{\ud
              \prob_n } \right)^{- \frac{\alpha}{1 - \alpha}}+ D
          \frac{\ud \prob_n}{\ud \prob} \right)^{\gamma}
      \right] \\
   &\leq& 2^{\gamma-1}C^{\gamma} y^{-\tfrac{\alpha}{1-\alpha}} \expec \left[
        \left(\frac{\ud \prob_n}{\ud
            \prob}\right)^{\frac{\gamma}{1-\alpha}} \left( \frac{\ud
            \qprob_p}{\ud \prob } \right)^{- \frac{\gamma \alpha}{1 -
            \alpha}}\right] + 2^{\gamma-1}D^{\gamma} \expec\left[ \left( \frac{\ud
            \prob_n}{\ud \prob} \right)^{\gamma}
      \right] \\
   &\leq& 2^{\gamma-1}C^{\gamma} y^{-\tfrac{\alpha}{1-\alpha}} \expec \left[
        \left(\frac{\ud \prob_n}{\ud
            \prob}\right)^{{\hat{p}}}\right]^{\frac{\gamma}{{\hat{p}}(1-\alpha)}}\,
      \expec\left[\left( \frac{\ud \qprob_p}{\ud \prob }
        \right)^{{-\hat{q}}}\right]^{1-\frac{\gamma}{{\hat{p}}(1-\alpha)}} \\
   &+& 2^{\gamma-1}D^{\gamma} \expec\left[ \left( \frac{\ud \prob_n}{\ud
            \prob} \right)^{\gamma} \right],
\end{eqnarray*}
which implies (UI).
\item A family $(U_n)_{n \in
\Natural}$ of utility functions is said to have a \emph{uniform
reasonable asymptotic elasticity}, if there exist constants $x_0>0$
and $\delta<1$ such that $x U_n'(x)\leq \delta U_n(x)$ for all $x
> x_0$ and $n\in\Natural$. Then, one can show (see
Proposition 6.3 in \cite{KraSch99}) that for each fixed $y>0$, there
exist $k,l>0$ such that $V_n^+(y z)\leq k V^+_n(z)+l$ for all $z>0,
n\in\Natural$. In other words, under uniform reasonable asymptotic
elasticity the ``annoying'' universal quantification over all  $y >
0$ in (UI)
 can be left out --- considering only the case $y=1$ is enough.
\item Several other sufficient conditions for (UI) in the case when
  $V_n=V$ for all $n\in\Natural$ and $N=0$ are given in \cite{larzit06}.
\end{enumerate}

\subsubsection{The main result}
The statement of our main result, whose proof
will be the given in  Section \ref{sec: proofs} below, follows.
In order to keep the unified notation for the cases $N=0$ and
$N>0$, we introduce the following conventions (holding throughout the
remainder of the paper):
all the statements in the sequel will notationally correspond to the
case $N>0$, and should be construed literally in that case. When
$N=0$, the arguments $r$ should be understood to take values in the
one-element set $\Real^0$, which we identify with $\set{0}$. Similarly,
the variables $p$ and $q$ will take the value $0$, and ${\mathcal
  Q}(0)={\mathcal Q}'$. In this case, a pair such as $(x,0)$ will be
identified with the constant $x\in\Real$.

\begin{thm} \label{thm: main}
Assume that \eqref{ass: NFLVR} and \eqref{ass: S-REP} are in
force, and  consider a sequence $(\prob_n)_{n \in
\Natural}$ of probability measures and a sequence $(U_n)_{n \in
\Natural}$ of utility functions such that \eqref{ass: CONV} and
\eqref{ass: UI} hold. Furthermore, let
$(x_n, q_n)_{n \in \Natural}$ be a $\K$-valued sequence
with $\lim_{n \to \infty} (x_n, q_n)
=: (x, q) \in \K$ and $(y_n, r_n)_{n \in
\Natural}$ an $\eL$-valued sequence
with $\lim_{n \to \infty} (y_n, r_n) =: (y, r) \in \eL$.

Set $u_n= u (x_n, q_n; U_n, \prob_n)$, $u_{\infty}=u(x,q; U,
\prob)$, $v_n= v (y_n, r_n; U_n, \prob_n)$, $v_{\infty}=v(y,r; U,
\prob)$, and let $\frac{\partial}{\partial x} u_n$,
$\frac{\partial}{\partial x} u_{\infty}$, $\frac{\partial}{\partial
y} v_n$, $\frac{\partial}{\partial y} v_{\infty}$, be the
corresponding derivatives with respect to the first variable.
Similarly,  set $\hat{X}_n=\hat{X}_T (x_n, q_n; U_n, \prob_n)$,
$\hat{X}_{\infty} = \hat{X}_T (x, q; U, \prob)$, $\hat{Y}_n=
\hat{Y}_T (y_n, r_n; V_n, \prob_n)$ and $\hat{Y}_\infty =\hat{Y}_T (y, r; V,
\prob)$. Finally, set $\prices_n=\prices (f; x_n, q_n; U_n,
\prob_n)$ and $\prices_{\infty}= \prices (f; x, q; U, \prob)$.

Then, we have the following limiting relationships for the value functions
and the optimal solutions in the primal and dual problems:
\begin{enumerate}
 \item
$\lim_{n \to \infty} u_n=u_{\infty}$, $\lim_{n \to \infty}
v_n=v_{\infty}$,
$\lim_{n \to \infty}  \frac{\partial}{\partial x}u_n=
\frac{\partial}{\partial x}u_{\infty}$, and
$\lim_{n \to \infty}  \frac{\partial}{\partial y}v_n=
\frac{\partial}{\partial y}v_{\infty}$.
 \item $
\lim_{n \to \infty} \hat{X}_n=\hat{X}_{\infty} \text{ and }
   \lim_{n\to\infty} \hat{Y}_n = \hat{Y}_{\infty}
\text{ in
       $\Lb^0$,}$
where, as usual, $\Lb^0$ is the family of all random
       variables endowed with the topology of  convergence
       in probability.
 \item for all $\epsilon > 0$, there
   exists $n_0\in\Natural$ such that
\[\prices_n \subseteq \prices_{\infty}
 + \epsilon B^N,\text{ for }n\geq n_0,
 \]
 where $B^N$ is the open ball of unit radius in $\Real^N$.
\end{enumerate}
\end{thm}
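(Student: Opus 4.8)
The plan is to transport the whole family of problems onto the single reference measure $\prob$. Writing $D^{(n)}$ for the density process of $\prob_n$ with respect to $\prob$ (so that $D^{(n)}_T = \ud\prob_n/\ud\prob$), the change-of-measure rule --- $M$ is a $\prob_n$-(super)martingale iff $MD^{(n)}$ is a $\prob$-(super)martingale --- yields a bijection $Y \leftrightarrow YD^{(n)}$ between the $\prob_n$-dual feasible set and the fixed $\prob$-dual feasible set $\Y(y,r)$, under which
\[
v_n(y,r) = \inf_{Y \in \Y(y,r)} \expec^{\prob} \Big[ \frac{\ud \prob_n}{\ud \prob}\, V_n\Big( Y_T \frac{\ud \prob}{\ud \prob_n} \Big) \Big],
\]
so that the feasible set is now $n$-independent and the entire $n$-dependence is concentrated in the integrand --- precisely the quantity appearing in \eqref{ass: UI}. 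I record the preliminary facts that $U_n \to U$ pointwise forces $V_n \to V$ and $V_n' \to V'$ locally uniformly by elementary convex analysis (cf. Remark \ref{rem:modes}), and that $\ud\prob_n/\ud\prob \to 1$ both $\prob$-a.s.\ along a subsequence and in $\Lb^1(\prob)$, by total-variation convergence.

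First I would establish two ``soft'' half-convergences at an arbitrary fixed point. For the dual, inserting the limiting optimizer $\hat Y_\infty(y,r)$ into the functional above gives $v_n(y,r)\le\expec^\prob\big[\tfrac{\ud\prob_n}{\ud\prob}\,V_n\big(\hat Y_{\infty,T}\tfrac{\ud\prob}{\ud\prob_n}\big)\big]$; the positive parts converge in $\Lb^1(\prob)$ by the uniform integrability furnished by \eqref{ass: UI}, while the negative parts are controlled by Fatou, whence $\limsup_n v_n(y,r) \le v_\infty(y,r)$. Symmetrically, the measure-independent limiting primal optimizer $\hat X_\infty(x,q)$ is feasible for every $\X(x,q)$, and a reverse-Fatou argument (with negative parts dominated through Fenchel--Young and \eqref{ass: UI}) yields $\liminf_n u_n(x,q) \ge u_\infty(x,q)$.

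These two halves close up through the conjugacy of Theorem \ref{thm: hugkra}: from $u_n(x,q) \le v_n(y',r') + xy' + \inner{q}{r'}$ and the dual $\limsup$, optimizing over $(y',r')$ gives $\limsup_n u_n \le u_\infty$, and dually $\liminf_n u_n \ge u_\infty$ upgrades to $\liminf_n v_n \ge v_\infty$; hence $u_n \to u_\infty$ and $v_n \to v_\infty$ \emph{pointwise} on $\K$ and $\eL$. As these are finite concave/convex functions the convergence is automatically locally uniform, which both transfers it to the moving points $(x_n,q_n),(y_n,r_n)$ and delivers convergence of the $x$- and $y$-derivatives, proving part (1). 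Part (3) is then pure convex analysis: by Theorem \ref{thm: hugkra}, $\partial u_n(x_n,q_n) = \{y_n^\ast\} \times R_n$ with $\prices_n = (y_n^\ast)^{-1} R_n$, and locally uniform convergence of the concave $u_n$ forces graphical (Attouch) convergence of their superdifferentials, so every limit of points $p_n \in \prices_n$ corresponds to an element of $\partial u_\infty(x,q)$ and hence lies in $\prices_\infty$ --- exactly the asserted upper-semicontinuity.

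The main obstacle is part (2), together with the rigorous justification of the uniform-integrability step above: the dual optimizers $\hat Y_\infty$ and $\hat Y_n$ are general supermartingale deflators rather than martingale-measure densities, whereas \eqref{ass: UI} is phrased through $\qprob \in \Q'(p)$. I would bridge this gap by a domination argument --- using that $V_n^+$ is decreasing and that the dual feasible terminal values are controlled by densities of measures in $\Q'(p)$ --- reducing the uniform integrability of $\tfrac{\ud\prob_n}{\ud\prob} V_n^+\big(\hat Y_{\infty,T}\tfrac{\ud\prob}{\ud\prob_n}\big)$ to that in \eqref{ass: UI}. For the convergence of optimizers, I would bound $(\hat Y_{n,T})_n$ in $\Lb^0$ by testing the supermartingale property against a super-replicating wealth in $\X'$, extract forward convex combinations converging $\prob$-a.s.\ by a Koml\'os/convex-compactness theorem, invoke the closedness and solidity of the dual feasible terminal values to see that the limit is feasible for the limiting problem, and identify it as $\hat Y_{\infty,T}$ through the value convergence $v_n \to v_\infty$ and the strict convexity of $V$; a subsequence argument then promotes this to $\hat Y_n \to \hat Y_\infty$ in $\Lb^0$. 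Finally $\hat X_n \to \hat X_\infty$ requires no further compactness: inverting the first-order condition $\hat Y_{n,T} = U_n'(\hat X_{n,T} + \inner{q_n}{f})$ of Theorem \ref{thm: hugkra} via $(U_n')^{-1} = -V_n'$ and using $V_n' \to V'$ locally uniformly gives the convergence in $\Lb^0$.
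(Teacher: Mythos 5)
Your architecture is broadly sound and overlaps with the paper's in several places (transporting everything onto the reference measure $\prob$, epi-/graphical convergence of convex functions for parts (1) and (3), recovering $\hat X_n$ from $\hat Y_n$ via $-V_n'$). However, two of your key analytic steps fail as stated. The more serious one is the primal inequality $\liminf_n u_n(x,q)\ge u_\infty(x,q)$: you plug the fixed optimizer $\hat X_\infty$ into the $n$-th problem and claim the negative parts of $Z_n U_n(\hat X_{\infty,T}+\inner{q}{f})$ are ``dominated through Fenchel--Young and (UI)''. Fenchel--Young gives $U_n(\xi)\le V_n(\eta)+\xi\eta$, an \emph{upper} bound on $U_n$; combined with \eqref{ass: UI} it controls $U_n^+$ --- which is what a $\limsup$ statement would need --- not $U_n^-$, which is what Fatou's lemma requires for a $\liminf$. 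Nothing in the hypotheses controls $\expec^{\prob_n}\bra{U_n^-(\hat X_{\infty,T}+\inner{q}{f})}$; it can be $+\infty$ for every $n$ (take $Z_n=(1-\eps_n)+\eps_n\chi$ with $\expec[\chi\, U^-(\hat X_{\infty,T}+\inner{q}{f})]=\infty$), in which case your test element gives the vacuous bound $u_n\ge-\infty$. This is precisely why the paper proves the hard ``liminf'' half on the \emph{dual} side, where the integrand satisfies $Z_nV_n(g_n/Z_n)\ge -g_n$ with $\expec[g_n]\le y_n$, a uniformly integrable lower bound; the price is the forward-convex-combination and truncation machinery ($V^\eps$, $\tV_n$, $\oV_n$, Dini's theorem) of subsection \ref{subsec: lower semicontinuity of dual}. (Your route could be salvaged by testing with $\hat X(x-\delta,q;U,\prob)+\delta$, whose terminal wealth is bounded below by $\delta$, and letting $\delta\downarrow 0$ using continuity of $u(\cdot,q)$ --- but that is not the argument you gave.)

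The second gap is the transfer of \eqref{ass: UI} to the dual optimizer: since $V_n^+$ is decreasing, the domination you invoke requires a bound $\hat Y_{\infty,T}\ge c\,\ud\qprob/\ud\prob$ from \emph{below} for some $\qprob\in\Q'(p)$ and $c>0$, and a dual optimizer --- a general supermartingale deflator --- satisfies no such bound. The paper's Lemma \ref{lem: dual is inf over bounded below dual elements} circumvents this by showing that the dual value is unchanged if the infimum is restricted to $\B(y,r,\qprob)$, whose elements are by construction bounded below by multiples of $y\,\ud\qprob/\ud\prob$ (take $(1-k^{-1})\hat Y_{\infty,T}+k^{-1}y\,\ud\qprob/\ud\prob$); you use exactly this convex-combination trick for your $f_n$'s in part (2), so the repair is within reach, but the ``domination argument'' as written is false. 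Finally, your Koml\'os-plus-identification plan for part (2) is a legitimate alternative to the paper's more direct device (a uniform strict-convexity inequality on the sets $C_m$, tested against the midpoint $(g_n+f_k)/(2Z_n)\in\D(\tfrac{y_n+y}{2},\tfrac{r_n+r}{2})$ and squeezed by value convergence), but it quietly relies on lower semicontinuity of $g\mapsto\expec[Z_nV_n(g/Z_n)]$ along convex combinations that mix different $n$'s --- which is exactly the hard technical content you have not supplied.
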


\begin{rem}
The set-inclusion $\prices_n \subseteq
\prices_{\infty} + \epsilon B^N$ for all $n$ large
enough is an \textsl{upper hemicontinuity}-type property of the
correspondence of marginal utility-based prices. It says that all
possible limit points of all possible sequences of marginal
utility-based prices will belong to the limiting price-set. It does
\emph{not} imply that this last set will be equal to the set of all
these possible limit points
--- indeed, it might be strictly larger.
\end{rem}
\section{Proofs} \label{sec: proofs}

This section concentrates on the proof of our main Theorem \ref{thm:
  main}.  First, we prove a lower semicontinuity-type result for of
the dual value function, which, interestingly, does \emph{not} depend
on the assumption (\ref{ass: UI}) from subsection \ref{subsec: lower
  semicontinuity of dual}.  Then, in subsection \ref{subsec:
  continuity of dual}, we use both \eqref{ass: CONV} and \eqref{ass:
  UI} to establish a complementary upper semicontinuity-type property
for the dual value function. Continuity of the primal value function
and upper hemicontinuity of the correspondence of
marginal utility-based prices
are proved in subsection \ref{subsec: continuity of
  primal and util indif prices}. Finally, subsection \ref{subsec:
  continuity of optimal wealth} deals with convergence in
$\Lb^0$ of the dual optimal element.
Convergence in $\Lb^0$ of the optimal terminal wealths is then
established using the
continuity
of the value functions.

\subsection{Preliminary remarks} We start by making some
remarks and conventions that will be in force throughout
 the proof.  Since there are many different probability measures
floating around, we choose $\prob$ to serve as the baseline:
\emph{all} expectations $\expec$ in the sequel will be taken with
respect to the probability $\prob$ --- we then consider the
Radon-Nikodym densities $Z_n := \ud \prob_n/ \ud \prob$ and use them
whenever we want to take expectation with respect to some $\prob_n$.
The space $\Lb^0$ of all a.s.-finite random variables is the same
for all (equivalent) probabilities and thus requires no identifier.
The notation $\Lb^1$ is reserved for $\Lb^1(\prob)$. Observe
that the convergence $\lim_{n \to \infty} \prob_n = \prob$ in total
variation is equivalent to the convergence $\lim_{n \to \infty} Z_n
= 1$, in $\Lb^1$. By Scheffe's Lemma (see \cite{Wil91}, p.~55)
 this is equivalent to the
(seemingly weaker)
statement $\lim_{n \to \infty} Z_n = 1$ in $\Lb^0$.

\smallskip

Let us move on to the discussion of utility functions. Note that
pointwise (and thus, by concavity,  uniform on compacts) convergence
of the sequence $(U_n)_{n \in \Natural}$ to a utility $U$ will imply
pointwise convergence of the sequence of Legendre-Fenchel transforms
$(V_n)_{n \in \Natural}$ to the corresponding Legendre-Fenchel
transform $V$ of the limiting utility $U$. We actually get a lot
more: the sequences $(U_n)_{n \in \Natural}$, $(V_n)_{n \in
\Natural}$ as well as their derivatives $(U'_n)_{n \in \Natural}$,
$(V'_n)_{n \in \Natural}$ converge \emph{uniformly} on compact
subsets of $(0, \infty)$ to their respective limits $U$, $V$, $U'$
and $V'$ (see, e.g., \cite{RocWet98} for a general statement or
\cite{larzit06} for a simple self-contained proof of this result). A
multidimensional version of this result will be used later on in
subsection \ref{subsec: continuity of optimal wealth}.

Also, without loss of generality we assume  that
each of the utility functions involved here is normalized in such a
way as to have $U_n (1) = 0$ and $U_n'(1) = 1$ --- this will mean
that $V_n(1) = V'_n(1) = -1$. One can check that nothing changes in
the validity of our Theorem \ref{thm: main} if we make this simple
affine transformation in the utilities, but the proofs below will be
much cleaner. Indeed,  we
can define a new sequence $(\hat{U}_n)_{n \in \Natural}$ via
\[
\hat{U}_n (x) = \frac{U_n(x) - U_n(1)}{U'_n(1)};
\]
pointwise convergence of the original sequence $(U_n)_{n \in
\Natural}$ implies pointwise convergence of both
$(\hat{U}_n)_{n \in
\Natural}$ and
 $(\hat{V}_n)_{n \in \Natural}$.

 \subsection{A lower semicontinuity-type property of the dual value function}
 \label{subsec: lower semicontinuity of dual}

We assume that \eqref{ass: NFLVR}, \eqref{ass: S-REP} and
\eqref{ass: CONV} hold throughout this subsection. The assumption
\eqref{ass: UI} is not yet needed.

\subsubsection{Preparatory work}
 Notice that  $(Z_n)_{n \in
\Natural}$ is $\prob$-uniformly integrable, and, more generally, that
the \textsl{convex hull} $\conv(Z_n ; n \in \Natural)$ is
$\prob$-uniformly integrable, as well.
Observe that
\begin{equation}%
\label{equ:dualvn}
    \begin{split}
v_n (y, r) \equiv v (y, r; \ U_n, \prob_n) = \inf_{g \in \D(y, r)}
\expec [Z_n V(g / Z_n)] = \expec [Z_n V(g_n / Z_n)],
    \end{split}
\end{equation}
where
\begin{equation} \label{eq: dual domain}
\D(y, r) := \{ g \in \Lb^0 \such 0 \leq g \leq Y_T \textrm{ for some
} Y \in \Y (y, r) \},
\end{equation}
$\Y (y, r)$ is the class of supermartingale deflators corresponding
to the limiting probability measure $\prob$, and $g_n \in \D(y, r)$
attains the infimum in (\ref{equ:dualvn}). We wish to show that
\begin{equation} \label{eq: lower semicontinuity of dual}
v(y,r) \ \leq \ \liminf_{n \to \infty} v_n(y, r)
\end{equation}

The ``liminf''  in \eqref{eq: lower
semicontinuity of dual} we can be safely regarded as an
\emph{actual} limit, passing to an attaining subsequence if necessary. By
the same token, we can also assume that the
convergence $\lim_{n \to \infty} Z_n = 1$  holds almost
surely, and not only in $\Lb^0$.

Lemma A.1.1 from \cite{delsch94} provides us with a finite random
variable $h\geq 0$, and a sequence $(h_n)_{n \in \Natural}$ such
that
\begin{equation} \label{eq: nice DS L0 convexity lemma}
\forall\, n\in\Natural,\ h_n \in \conv(g_n, g_{n+1}, \ldots)\text{
and }\lim_{n \to \infty} h_n = h,\text{ a.s.}
\end{equation}
In
\cite{hugkra04} the authors show
that for all $(y, r) \in \eL$ the convex set $\D(y,r)$ is closed in $\Lb^0$, thus we
have $h \in \D(y,r)$. For concreteness, let us  write $h_n = \sum_{k =
n}^{m_n} \alpha^n_k g_k$ for some $m_n \geq n$ and $0 \leq
\alpha^n_k \leq 1$ such that $\sum_{k = n}^{m_n} \alpha^n_k = 1$. We
then also set $\zeta_n := \sum_{k = n}^{m_n} \alpha^n_k Z_k$ and
observe that $\lim_{n \to \infty} \zeta_n = 1$ holds almost surely;
here, it is crucial that we have $\lim_{n \to \infty} Z_n = 1$
almost surely and not only in $\Lb^0$ --- $\Real$ is a locally
convex space, while $\Lb^0$ is not.

\subsubsection{On the sequence $(V_n)_{n \in \Natural}$}

For $n \in \Natural$ and $\epsilon \in (0,1)$, define the function
$V_n^\epsilon$ as follows: set $V_n^\epsilon (x) = V_n(x)$ for $x \geq
\epsilon$, and extend $V_n^\epsilon$ to $[0,\epsilon)$ in an affine and
continuously differentiable way, i.e.~ match the zeroth and first
derivatives at $x=\epsilon$.
This recipe uniquely determines a decreasing and convex function
$V_n^\epsilon$.
Of course, $\lim_{\epsilon
\downarrow 0} \uparrow V_n^\epsilon (x) = V_n (x)$, for all $x>0$.
 In the same manner as
above, and using the function $V$,  define $V^\epsilon$ for all $\epsilon \in
(0,1)$.

Since $\lim_{n \to \infty} V_n(x) = V(x)$ and
$\lim_{n \to \infty} V'_n(x) = V'(x)$  \emph{uniformly} for $x \in
[\epsilon, 1]$, we have that $\lim_{n \to \infty} V_n^\epsilon(x) =
V^\epsilon(x)$, \emph{uniformly} for $x \in [0, 1]$.
Notice that this
uniform convergence fails in general for $\epsilon = 0$, unless
$(U_n)_{n \in \Natural}$ is uniformly bounded from above
(equivalently, if $(V_n)_{n \in \Natural}$ is uniformly bounded from
above). It follows that
\begin{equation}%
\label{equ:n-one}
    \begin{split}
\forall\, \epsilon \in (0,1)\,\
\exists\, n_1(\epsilon) \in \Natural,\ \forall\, x\in [0,1],\
\forall\, n\geq n_1(\epsilon),\
V_n^\epsilon(x) \geq V^\epsilon(x) - \epsilon.
    \end{split}
\end{equation}

\smallskip

Let $\tV_n$ denote the \emph{convex minor} of the family $\{ V_n, V_{n+1},
\ldots\}$; that is, $\tV_n$ is the largest convex function that is
dominated by all $V_k$ for $k \geq n$. Each $\tV_n$ is clearly
convex and decreasing. Observe also that $\tV_n(x) \geq - x$.
Indeed, remembering that $V_n(1) = V'_n(1) = -1$, for all
$n\in\Natural$, one concludes that
  $V_n(x) \geq - x$
 for all $n
\in \Natural$.
 In fact,
\[
-1 + \int_1^x \Big( \inf_{k \geq n} V'_k(u) \Big) \ud u \ \leq \
\tV_n(x) \ \leq \ V(x).
\]
This last expression, and the fact that $\lim_{n \to \infty} V'_n =
V'$ uniformly on compact subsets of $(0, \infty)$ imply that
$\lim_{n \to \infty} \uparrow \tV_n = V$ uniformly on compact
subsets of $(0, \infty)$.

\smallskip

Define now the ``average'' functions $\oV_n (x) :=
\tV_n(x) / x$, $x>0$ for all $n \in \Natural$, as well as $\oV (x) := V(x)
/ x$, $x>0$. Observe that $\lim_{n \to \infty} \uparrow \oV_n = \oV$
(increasing limit). The following, stronger, statement holds as well.
\begin{lem}
\label{lem:uniform} $\lim_{n \to \infty} \oV_n = \oV$, uniformly on
$[1,\infty)$.
\end{lem}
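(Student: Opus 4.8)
The plan is to reduce the two-sided statement to a one-sided one and then split $[1,\infty)$ into a compact piece and a tail, the tail being where all the difficulty lies. Since it was already shown that $\tV_n\leq V$, we have $\oV_n\leq\oV$ on $(0,\infty)$, so $|\oV_n(x)-\oV(x)|=\oV(x)-\oV_n(x)=\pare{V(x)-\tV_n(x)}/x\geq 0$, and it suffices to prove $\sup_{x\geq 1}\pare{\oV(x)-\oV_n(x)}\to 0$. I would also record two asymptotic facts about the limiting dual function that drive the argument: because $U$ satisfies the Inada condition $U'(0+)=\infty$, its conjugate is asymptotically flat, $V'(x)\to 0$ and hence $V(x)/x\to 0$ as $x\to\infty$; and, since $V$ is strictly decreasing, $\oV(x)=V(x)/x<0$ for every $x\geq 1$.

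On any fixed compact piece $[1,X]$ the estimate is immediate: using $x\geq 1$ and the already-established uniform convergence $\tV_n\to V$ on compact subsets of $(0,\infty)$, one gets $\sup_{x\in[1,X]}\pare{\oV(x)-\oV_n(x)}\leq \sup_{x\in[1,X]}\pare{V(x)-\tV_n(x)}\to 0$, so this part is below $\eps$ for all $n$ large. The tail $[X,\infty)$ is the main obstacle, because pointwise (and even uniform-on-compacts) convergence carries no information near $+\infty$; the device that converts single-point data at $X$ into control over the whole half-line is convexity. Fix $\eps\in(0,1)$ and choose $X$ so large that $V'(X)>-\eps$ and $V(X)/X>-\eps$, which is possible by the flatness of $V$. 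For $x\geq X$ the subgradient inequality for the convex function $\tV_n$ gives
\[
\tV_n(x)\ \geq\ \tV_n(X)+(\tV_n)'_+(X)\,(x-X),
\]
hence, dividing by $x$,
\[
\oV_n(x)\ \geq\ (\tV_n)'_+(X)+\frac{\tV_n(X)-(\tV_n)'_+(X)\,X}{x}.
\]
The right-hand side is a monotone function of $x$ on $[X,\infty)$ whose two extreme values are its value $\oV_n(X)$ at $x=X$ and its limit $(\tV_n)'_+(X)$ at $x=\infty$; therefore $\oV_n(x)\geq\min\{(\tV_n)'_+(X),\,\oV_n(X)\}$ for all $x\geq X$.

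It remains to see that both of these quantities exceed $-\eps$ for $n$ large. Now $\tV_n(X)\to V(X)$ by pointwise convergence, so $\oV_n(X)\to V(X)/X>-\eps$; and since the $\tV_n$ are convex, converge uniformly on a neighbourhood of $X$, and the limit $V$ is differentiable at $X$, their one-sided derivatives also converge, $(\tV_n)'_+(X)\to V'(X)>-\eps$ (a standard convex-analysis fact, e.g.\ Theorem~25.7 in \cite{RocWet98}). Hence for all sufficiently large $n$ we have $\oV_n(x)>-\eps$ on the entire tail, and combining with $\oV(x)<0$ yields $0\leq\oV(x)-\oV_n(x)<\eps$ uniformly for $x\geq X$. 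Together with the compact estimate this gives $\sup_{x\geq 1}\pare{\oV(x)-\oV_n(x)}<\eps$ for $n$ large, which is the claim. The only delicate point is precisely the tail bound: the proof works because convexity propagates the pointwise convergence of $\tV_n$ and of its derivative at the single well-chosen point $X$ into a uniform one-sided bound on all of $[X,\infty)$, and because the Inada condition forces $V$ to be asymptotically flat there.
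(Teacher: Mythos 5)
Your proof is correct, but it follows a genuinely different route from the paper's. The paper compactifies the half-line and invokes Dini's theorem: it exploits the fact that $\oV_n \uparrow \oV$ monotonically in $n$, checks that the values at infinity $\oV_n(\infty)$ also converge to $\oV(\infty)=0$ (using that each $\oV_n$ is increasing on $[1,\infty)$ thanks to the normalization $\tV_n(1)=\tV_n'(1)=-1$), and then concludes uniform convergence on $[1,\infty]$ in one stroke. You instead make no use of monotonicity in $n$ beyond the one-sided domination $\tV_n\leq V$, and handle the tail by hand: the supporting-line inequality $\tV_n(x)\geq \tV_n(X)+(\tV_n)'_+(X)(x-X)$, divided by $x$, pins $\oV_n$ on all of $[X,\infty)$ between $\oV_n(X)$ and $(\tV_n)'_+(X)$, and both of these converge to quantities above $-\eps$ by the pointwise convergence $\tV_n(X)\to V(X)$ and the standard convex-analysis fact that subgradients converge at points where the limit is differentiable. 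Your argument is slightly longer but arguably more robust --- it would survive without the monotone structure of the sequence $(\tV_n)_{n\in\Natural}$, needing only domination by $V$, pointwise convergence, convexity, and the asymptotic flatness $V'(\infty)=0$ coming from the Inada condition --- whereas the paper's Dini argument is shorter precisely because that monotone structure has already been established. One cosmetic point: the derivative-convergence fact you quote is Theorem 25.7 of Rockafellar's \emph{Convex Analysis}, not of the Rockafellar--Wets volume cited in the paper; the corresponding statement there is the attentiveness of epi-convergence to subgradients (Theorem 12.35), which the paper itself uses later in a different context.
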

\proof We base the proof on Dini's theorem.  In order to be able to
use it we have to ensure that the sequence $\oV_n (\infty)$
increases and converges to $\oV (\infty)$, where $\oV_n(\infty):=
\lim_{x \to \infty} \oV_n (x)$ and $\oV(\infty) := \lim_{x \to
\infty} \oV (x) = \lim_{x \to \infty} V' (x) = 0$ (since $V$ is the
convex conjugate of $-U(- \cdot)$).

The normalization $V_n(1) = V'_n(1) = -1$ implies that $\tV_n(1) = \tV'_n(1) =
-1$ and $V(1) = V'(1) = -1$, and that $\oV$ and all $\oV_n$
are \emph{increasing} for $x \in [1, \infty)$.

Then, for an arbitrary $\delta > 0$, pick $M >
1$  so that $\oV(M) > - \delta / 2$ and $n_2
\equiv n_2 (\delta, M) \in \Natural$  so that $\oV_n(M)
> \oV (M) - \delta / 2 > - \delta$ for all $n \geq n_2$.
It follows that $\oV_n(\infty) \geq \oV_n (M)  > - \delta$ for all $n \geq n_2$
and thus that $\lim_{n \to \infty} \oV_n(\infty) = \oV(\infty) = 0$.
As proclaimed, Dini's theorem will imply that $\lim_{n \to \infty}
\oV_n = \oV$ \emph{uniformly} on $[1, \infty)$. \qed

\begin{lem} \label{lem: dual stuff}
The mapping $(z, y) \mapsto z V^\epsilon (y/z)$ is convex in $(z, y)
\in (0, \infty)^2$. Furthermore, for each $\epsilon>0$,
 there exists $n_0(\epsilon) \in
\Natural$ such that for all
$n \geq n_0(\epsilon)$ we have
\[
z V^\epsilon_n(y/z) \geq z V^\epsilon (y/z) - \epsilon (y + z).
\]
for all pairs $(z, y) \in (0, \infty)^2$.
\end{lem}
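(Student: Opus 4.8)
The first assertion is the standard fact that the \emph{perspective} of a convex function is jointly convex. Since $V^\epsilon$ is convex on $(0,\infty)$ by construction, given $(z_1,y_1),(z_2,y_2)\in(0,\infty)^2$ and $\lambda\in[0,1]$ I would set $\bar z=\lambda z_1+(1-\lambda)z_2$ and $\mu=\lambda z_1/\bar z\in[0,1]$, and observe the algebraic identity $(\lambda y_1+(1-\lambda)y_2)/\bar z=\mu(y_1/z_1)+(1-\mu)(y_2/z_2)$. Applying convexity of $V^\epsilon$ to this convex combination and multiplying through by $\bar z$ yields exactly the joint-convexity inequality for $(z,y)\mapsto zV^\epsilon(y/z)$.

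For the second assertion, after dividing by $z>0$ it suffices to prove, for all $n$ large, the one-dimensional bound $V_n^\epsilon(x)\geq V^\epsilon(x)-\epsilon\max\{1,x\}$ for every $x>0$; substituting $x=y/z$ and multiplying by $z$ then gives $zV_n^\epsilon(y/z)\geq zV^\epsilon(y/z)-\epsilon\max\{z,y\}\geq zV^\epsilon(y/z)-\epsilon(y+z)$, which is the claim. I would establish this one-dimensional bound on the two ranges $x\in(0,1]$ and $x\in[1,\infty)$ separately. On $(0,1]$ the bound $V_n^\epsilon(x)\geq V^\epsilon(x)-\epsilon$ is already recorded in \eqref{equ:n-one}, valid for $n\geq n_1(\epsilon)$.

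The range $x\in[1,\infty)$ is where the real work lies, since the uniform convergence of \eqref{equ:n-one} genuinely fails there when the $U_n$ are not uniformly bounded above; this is precisely the regime that the convex minors $\tV_n$ and the averages $\oV_n$ were introduced to control. Here I would use that $\tV_n\leq V_n$ (as $\tV_n$ is dominated by every $V_k$, $k\geq n$) together with $V_n^\epsilon=V_n$ and $V^\epsilon=V$ on $[\epsilon,\infty)\supseteq[1,\infty)$, to write, for $x\geq1$,
\[
V_n^\epsilon(x)-V^\epsilon(x)=V_n(x)-V(x)\geq \tV_n(x)-V(x)=x\bigl(\oV_n(x)-\oV(x)\bigr).
\]
Lemma \ref{lem:uniform} supplies an index $n_2(\epsilon)$ with $\oV_n(x)\geq\oV(x)-\epsilon$ for all $x\geq1$ and $n\geq n_2(\epsilon)$, so the right-hand side is at least $-\epsilon x$, giving $V_n^\epsilon(x)\geq V^\epsilon(x)-\epsilon x$ on $[1,\infty)$.

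Taking $n_0(\epsilon)=\max\{n_1(\epsilon),n_2(\epsilon)\}$ combines the two ranges into the single bound $V_n^\epsilon(x)\geq V^\epsilon(x)-\epsilon\max\{1,x\}$, completing the proof. The only subtlety is the one flagged above: the passage to the averaged functions is essential because no uniform estimate survives directly as $x\to\infty$, and Lemma \ref{lem:uniform} (via Dini) is exactly what restores uniformity on $[1,\infty)$.
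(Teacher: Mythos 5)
Your proposal is correct and follows essentially the same route as the paper: the same split into the ranges $y/z\leq 1$ (handled by \eqref{equ:n-one}) and $y/z\geq 1$ (handled via $V_n\geq\tV_n$ and the uniform convergence of the averages $\oV_n$ from Lemma \ref{lem:uniform}), recombined into the bound $\epsilon\max\{z,y\}\leq\epsilon(y+z)$. The only difference is cosmetic --- you write out the standard perspective-function convexity argument where the paper cites a reference, and you make explicit the identities $V_n^\epsilon=V_n$, $V^\epsilon=V$ on $[1,\infty)$ that the paper leaves implicit.
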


\begin{proof} The fact that $(z, y) \mapsto z V^\epsilon (y/z)$ is
convex in $(z, y) \in (0, \infty)^2$ is a consequence of the
convexity of $V^\epsilon$ and is quite standard. A detailed proof
can be found, for example, in \cite{urrlem}, page 90.

For the second claim, pick $\epsilon>0$, and
use Lemma \ref{lem:uniform} to find a natural number $n_3(\epsilon)$
such that
$\oV_n^\epsilon (x) \geq
\oV^\epsilon (x) - \epsilon$ for \emph{all} $x \geq 1$ and $n \geq
n_3 (\epsilon)$. Then, pick $n_1(\epsilon)$ as in (\ref{equ:n-one}).
Finally, choose $n_0 (\epsilon) := \max \{n_1(\epsilon),
n_3(\epsilon)\}$. For all $n \geq n_0 (\epsilon)$ we now have:
\[
z V^\epsilon_n(y/z) \geq z (V^\epsilon (y/z) - \epsilon ) \indic_{\{
y \leq z \}} + y (\oV (y/z) - \epsilon ) \indic_{\{ y > z \}} \geq z
V^\epsilon (y/z) - \epsilon (y + z).
\]
\end{proof}

\subsubsection{The conclusion of the proof of (\ref{eq: lower semicontinuity of dual})}
Let $\epsilon \in (0, 1)$ be fixed, but arbitrary. According to
Lemma \ref{lem: dual stuff}, for all $n \geq n_0(\epsilon)$ we have
$Z_n V_n^\epsilon (g_n / Z_n)  \geq Z_n V^\epsilon (g_n / Z_n) -
\epsilon (Z_n + g_n)$; applying expectation with respect to $\prob$
and taking limits (remember that we have passed in a subsequence so
that $\lim_{n \to \infty} v_n(y, r)$ exists) we get
\begin{equation} \label{eq: lower semicont estimate 1}
\lim_{n \to \infty} v_n(y, r) \geq \limsup_{n \to \infty} \expec
[Z_n V_n^\epsilon (g_n / Z_n)] \geq \limsup_{n \to \infty} \expec
[Z_n V^\epsilon (g_n / Z_n)] - (y + 1) \epsilon.
\end{equation}
Apply Lemma \ref{lem: dual stuff} again to get $\zeta_n V^\epsilon
(h_n / \zeta_n) \ \leq \ \sum_{k = n}^{m_n} \alpha^n_k Z_k
V^\epsilon (g_k / Z_k)$, where the sequence $(h_n)_{n \in \Natural}$
is the one of \eqref{eq: nice DS L0 convexity lemma}; this implies
that
\[
\limsup_{n \to \infty} \expec [Z_n V^\epsilon (g_n / Z_n)] \ \geq \
\limsup_{n \to \infty} \sum_{k = n}^{m_n} \alpha^n_k \expec [Z_k
V^\epsilon (g_k / Z_k)] \ \geq \ \limsup_{n \to \infty} \expec [
\zeta_n V^\epsilon (h_n / \zeta_n) ].
\]
A combination of this inequality
with the  estimate \eqref{eq: lower semicont
estimate 1} yields that
\begin{equation} \label{eq: lower semicont estimate 2}
\lim_{n \to \infty} v_n(y, r) \ \geq \ \limsup_{n \to \infty} \expec
[ \zeta_n V^\epsilon (h_n / \zeta_n) ] - (y + 1) \epsilon.
\end{equation}

\smallskip

Since $\oV^\epsilon$ is increasing on $[1, \infty)$ and satisfies
$\oV^\epsilon(1) = -1$ and $\oV^\epsilon(\infty) = 0$, one can
choose $M > 1$ such that $\oV^\epsilon (M) = - \epsilon$ and define
$\oV^{\epsilon, M}$ by requiring $\oV^{\epsilon, M} (x) =
\oV^\epsilon (x)$ for $0 < x \leq M$, $\oV^{\epsilon, M} (x) = 0$
for all $x \geq M+1$, and interpolating in a continuous way
between $M$ and $M+1$ so that $\oV^{\epsilon} \leq  \oV^{\epsilon,
M}$. Then $\oV^{\epsilon, M} - \epsilon \leq \oV^{\epsilon} \leq
\oV^{\epsilon, M}$
and
\begin{equation} \label{eq: lower semicont estimate 3}
\zeta_n V^\epsilon (h_n / \zeta_n) = h_n \oV^\epsilon (h_n /
\zeta_n) \geq h_n \oV^{\epsilon, M} (h_n / \zeta_n) - \epsilon h_n.
\end{equation}
Observe that $h_n \oV^{\epsilon, M} (h_n / \zeta_n) \leq
V^\epsilon(0) \zeta_n$; also, since $\oV^{\epsilon, M} (x) \geq -1$
for all $x > 0$ and $\oV^{\epsilon, M} (x) = 0$ for $x > M+1$, we
have
\[
h_n \oV^{\epsilon, M} (h_n / \zeta_n) = h_n \oV^{\epsilon, M} (h_n /
\zeta_n) \indic_{\{ h_n \leq (M+1) \zeta_n\}} \geq - (M+1) \zeta_n.
\]
It follows that $| h_n \oV^{\epsilon, M} (h_n / \zeta_n) | \leq
\kappa^{\epsilon, M} \zeta_n$, where $\kappa^{\epsilon, M} := \max
\{V^\epsilon(0) ,M+1 \}$. The sequence $(\zeta_n)_{n \in \Natural}$
is $\prob$-uniformly integrable with $\lim_{n \to \infty} \zeta_n =
1$ a.s, and  $\lim_{n \to \infty} h_n = h$, a.s. So,
by \eqref{eq: lower semicont estimate 3}, we have
\[
\limsup_{n \to \infty} \expec [\zeta_n V^\epsilon (h_n / \zeta_n)] \
\geq \ \expec [h \oV^{\epsilon, M} (h)] - \epsilon y \ \geq \ \expec
[h \oV^\epsilon (h)] - \epsilon y \ = \ \expec [V^\epsilon (h)] -
\epsilon y.
\]
Combining this last estimate with \eqref{eq: lower semicont estimate
2} we get
\[ \lim_{n \to \infty} v_n (y, r) \ \geq \ \limsup_{n \to \infty}
\expec [\zeta_n V^\epsilon (h_n / \zeta_n)] - (y + 1) \epsilon \
\geq \ \expec [V^\epsilon (h )] - (2y + 1) \epsilon
\]
Now, since $V^\epsilon (h ) \geq - h$ and $h \in \Lb^1$, one can use
the monotone convergence theorem in the last inequality and the fact
that $h \in \D(y, r)$ to get (as $\epsilon \downarrow 0$) that
\[
\lim_{n \to \infty} v_n (y, r) \ \geq \ \expec^\prob [V (h )] \ \geq
\ v(y,r),
\]
which finishes the proof.

\subsection{Limiting behavior of the sequence of  dual value
  functions}
\label{subsec: continuity of dual}

From now on, we assume that all four conditions \eqref{ass: NFLVR},
\eqref{ass: S-REP}, \eqref{ass: CONV} and \eqref{ass: UI} hold. The first
order of business is
 to study the behavior of the limit superior of the
sequence of the dual value functions. Then, we combine the obtained
result with that of
subsection \ref{subsec: lower semicontinuity of dual}.

\subsubsection{Auxiliary Results} For future reference,
for any $p \in \prices$, $\qprob \in \Q'(p)$ and $(y, r) \in \eL$
such that $y p = r$ we define
\begin{equation} \label{eq: bounded from below dual elements}
\B (y, r, \qprob) := \{ g \in \D (y,r) \ | \ \tfrac{1}{g} \ud \qprob /
\ud \prob\in\Lb^{\infty}. \}.
\end{equation}
Since $\D (y,r)$ is
convex and $y \ud \qprob / \ud \prob \in \D (y,r)$, we have that
for all $g \in \D (y,r)$ and $k \in
\Natural$, $k^{-1} (y \ud \qprob / \ud \prob) + (1- k^{-1}) g \in \B
(y,r, \qprob)$; in particular, $\B (y,r, \qprob) \neq \emptyset$.

\begin{lem} \label{lem: dual is inf over bounded below dual elements}
Fix $y > 0$ and $p \in \prices$, and  let $\qprob \in \Q'(p)$ be such
that $V^+ ( y \ud \qprob / \ud \prob ) \in \Lb^1 (\prob)$. Then,
with $r := y p$ and $\B (y, r, \qprob)$ defined in \eqref{eq:
bounded from below dual elements}, we have
\[
v (y, r) = \inf_{g \in \B (y, r, \qprob)} \expec [ V (g)].
\]
\end{lem}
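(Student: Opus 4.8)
The plan is to prove the two inequalities separately. Since $\B(y,r,\qprob)\subseteq\D(y,r)$ by construction, the bound $v(y,r)=\inf_{g\in\D(y,r)}\expec[V(g)]\leq\inf_{g\in\B(y,r,\qprob)}\expec[V(g)]$ is immediate from \eqref{equ:dualvn}. All the content therefore lies in the reverse inequality $\inf_{g\in\B(y,r,\qprob)}\expec[V(g)]\leq v(y,r)$, which I would obtain by showing that every $g\in\D(y,r)$ can be approximated from within $\B(y,r,\qprob)$ without increasing $\expec[V(\cdot)]$ in the limit.

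Concretely, fix $g\in\D(y,r)$; we may assume $\expec[V(g)]<\infty$, as otherwise the desired inequality is vacuous. Put $a:=y\,\ud\qprob/\ud\prob$, which lies in $\D(y,r)$ (as recorded just before the statement of the lemma, using $r=yp$ and $\qprob\in\Q'(p)$), and set $g_k:=k^{-1}a+(1-k^{-1})g$ for $k\in\Natural$. By convexity of $\D(y,r)$ each $g_k\in\D(y,r)$, and from $g_k\geq k^{-1}a=k^{-1}y\,\ud\qprob/\ud\prob$ we get $\tfrac{1}{g_k}\,\ud\qprob/\ud\prob\leq k/y\in\Lb^{\infty}$, so $g_k\in\B(y,r,\qprob)$ --- this is exactly the elementary observation preceding the lemma. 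Convexity of $V$ now yields the pointwise estimate $V(g_k)\leq k^{-1}V(a)+(1-k^{-1})V(g)$.

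Before integrating I would settle the integrability bookkeeping, which is the only delicate point. The normalization $V(1)=V'(1)=-1$ forces $V(x)\geq -x$, hence $V^-(\cdot)\leq(\cdot)$. Moreover any $h\in\D(y,r)$ satisfies $h\leq Y_T$ for some $Y\in\Y(y,r)$, and taking the constant wealth process $X\equiv 1\in\X$ in the definition of $\Y(y,r)$ shows $Y$ is a $\prob$-supermartingale, so $\expec[Y_T]\leq Y_0=y$ and thus $h\in\Lb^1$ with $\expec[V^-(h)]\leq\expec[h]\leq y$. Applying this to $h=a$ (where in fact $\expec[a]=y$) together with the hypothesis $V^+(a)\in\Lb^1$ makes $\expec[V(a)]$ finite, while for $h=g$ and $h=g_k$ it makes $\expec[V(g)]$ and $\expec[V(g_k)]$ well-defined in $(-\infty,\infty]$. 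Taking expectations in the pointwise estimate is therefore legitimate (no $\infty-\infty$ occurs) and gives $\expec[V(g_k)]\leq k^{-1}\expec[V(a)]+(1-k^{-1})\expec[V(g)]$.

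Finally I would let $k\to\infty$. Since $\expec[V(a)]$ is finite, the term $k^{-1}\expec[V(a)]$ vanishes, so $\limsup_{k\to\infty}\expec[V(g_k)]\leq\expec[V(g)]$. As each $g_k\in\B(y,r,\qprob)$, this forces $\inf_{g'\in\B(y,r,\qprob)}\expec[V(g')]\leq\expec[V(g)]$; taking the infimum over $g\in\D(y,r)$ gives $\inf_{\B}\leq v(y,r)$, and combined with the trivial inequality this proves the claimed equality. The whole argument is short; the step I would be most careful with is the integrability control of the preceding paragraph, since it is what guarantees both that the convexity inequality survives integration and that $k^{-1}\expec[V(a)]\to 0$.
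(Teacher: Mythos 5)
Your proof is correct and takes essentially the same route as the paper: the paper forms the same convex combinations $(1-k^{-1})g_* + k^{-1}(y\,\ud\qprob/\ud\prob)$, starting from the dual optimizer $g_*$ rather than from an arbitrary $g\in\D(y,r)$, and concludes by the same convexity estimate and passage $k\to\infty$. Your explicit bookkeeping of the integrability of $V^-$ via $V(x)\geq -x$ and $\D(y,r)\subseteq\Lb^1(\prob)$ is a point the paper leaves implicit, but it is not a different argument.
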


\proof Let $g_* \in \D (y,r)$ satisfy $v (y, r) = \expec [V (g_*)]$.
For all $k \in \Natural$ define $g_{*, k} := (1 - k^{-1}) g_* +
k^{-1} (y \ud \qprob / \ud \prob)$. Then,  $g_{*, k} \in \B (y, r,
\qprob)$ and $\expec [ V (g_{*, k})] \leq (1 - k^{-1}) \expec [V
(g_*)] + k^{-1} \expec [V ( y \ud \qprob / \ud \prob )]$. Finally,
since $V( y \ud \qprob / \ud \prob ) \in
\Lb^1$, we get that
$\lim_{k \to
\infty} \expec [ V (g_{*, k})] = v (y, r)$. \qed

\begin{lem} \label{lem: continuity in L^1 for bounded below dual elements}
Suppose that for some $f \in \Lb_+^0$ the collection $(Z_n V^+_n(f/
Z_n))_{n \in \Natural}$ of random variables is $\prob$-uniformly
integrable. Let also $g \in \Lb^1(\prob)$ be such that $g \geq f$
almost surely. Then, $\lim_{n \to \infty} Z_n V_n (g/ Z_n) = V (g)$
in $\Lb^1 (\prob)$.
\end{lem}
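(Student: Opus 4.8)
The plan is to establish the $\Lb^1(\prob)$ convergence through the classical criterion that a sequence converges in $\Lb^1$ exactly when it converges in probability and is $\prob$-uniformly integrable (Vitali's theorem); I will verify both ingredients, treating the positive and negative parts of $Z_n V_n(g/Z_n)$ separately, since they are controlled by different mechanisms. The two structural facts I will lean on are, first, that each $V_n$ is \emph{decreasing}, so from $g \geq f \geq 0$ and $Z_n > 0$ we obtain $g/Z_n \geq f/Z_n$ and hence $V_n^+(g/Z_n) \leq V_n^+(f/Z_n)$; and second, the normalization $V_n(1) = V_n'(1) = -1$ together with convexity yields the tangent-line bound $V_n(x) \geq -x$ for every $x > 0$ and every $n$.

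For uniform integrability of the positive parts, multiplying the first inequality by $Z_n > 0$ gives $0 \leq Z_n V_n^+(g/Z_n) \leq Z_n V_n^+(f/Z_n)$, so the positive parts are dominated by the family $(Z_n V_n^+(f/Z_n))_{n \in \Natural}$, which is $\prob$-uniformly integrable by hypothesis; since domination preserves uniform integrability, $(Z_n V_n^+(g/Z_n))_{n \in \Natural}$ is uniformly integrable as well. For the negative parts, the bound $V_n(x) \geq -x$ gives $Z_n V_n(g/Z_n) \geq -g$, whence $0 \leq Z_n V_n^-(g/Z_n) \leq g \in \Lb^1(\prob)$; thus the negative parts are dominated by a single integrable random variable and are a fortiori uniformly integrable. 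In particular $V^-(g) \leq g \in \Lb^1$, so the candidate limit $V(g)$ is integrable.

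The substantive step is the convergence in probability $Z_n V_n(g/Z_n) \to V(g)$ in $\Lb^0$. Here I use the subsequence characterization of convergence in probability, so it suffices to show that an arbitrary subsequence admits a further subsequence along which the convergence holds almost surely; along any subsequence I extract, using $\lim_{n} Z_n = 1$ in $\Lb^0$, a sub-subsequence with $Z_n \to 1$ almost surely. On the event $\{g > 0\}$, for almost every such $\omega$ the arguments $g(\omega)/Z_n(\omega)$ eventually lie in a fixed compact subset of $(0,\infty)$ and converge to $g(\omega)$; the locally uniform convergence of $V_n$ to the continuous limit $V$ (recorded in the preliminary remarks) then yields $V_n(g(\omega)/Z_n(\omega)) \to V(g(\omega))$, and multiplying by $Z_n(\omega) \to 1$ gives $Z_n V_n(g/Z_n) \to V(g)$ pointwise there. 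Combining this pointwise convergence with the uniform integrability of the positive parts (Vitali) and the integrable domination of the negative parts (dominated convergence) produces $\Lb^1$ convergence of each part, and therefore of $Z_n V_n(g/Z_n)$ to $V(g)$.

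I expect the delicate point to be the locus $\{g = 0\}$ in the convergence-in-probability step: there $f = 0$ as well, the argument $g/Z_n$ no longer sits in a compact subset of $(0,\infty)$, and pointwise convergence reduces to the behavior of $V_n$ at the origin, which is \emph{not} governed by locally uniform convergence on $(0,\infty)$. This is precisely where the uniform-integrability hypothesis on $(Z_n V_n^+(f/Z_n))_{n \in \Natural}$ must be invoked to control the contribution of $\{g = 0\}$; equivalently, in the situations in which the lemma is applied --- notably for $g \in \B(y,r,\qprob)$ as in Lemma \ref{lem: dual is inf over bounded below dual elements}, where $\tfrac{1}{g}\, \ud\qprob/\ud\prob \in \Lb^\infty$ forces $g > 0$ almost surely --- this locus is $\prob$-null and the difficulty disappears entirely.
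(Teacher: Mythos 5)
Your proof is correct and follows essentially the same route as the paper's: the paper likewise reduces the claim to $\prob$-uniform integrability of $(Z_n V_n(g/Z_n))_{n\in\Natural}$, dominates the positive parts by $Z_n V_n^+(f/Z_n)$ (using that $V_n$ is decreasing and $g\geq f$) and the negative parts by $g$ via the normalization bound $V_n(x)\geq -x$, and concludes by Vitali --- the only difference being that the paper simply \emph{asserts} the convergence $Z_n V_n(g/Z_n)\to V(g)$ in $\Lb^0$, which you work out via subsequences and locally uniform convergence of $V_n$ on compacts of $(0,\infty)$. Your closing worry about $\set{g=0}$ is legitimate and points at a real gap in that assertion (and the uniform-integrability hypothesis does \emph{not} by itself repair it, since it cannot force $V_n(0+)\to V(0+)$); the correct resolution is the second one you give, namely that in every application of the lemma $f$ is a positive multiple of a density of an equivalent measure, so $g\geq f>0$ a.s.\ and the problematic locus is null.
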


\proof Since we have $\lim_{n \to \infty} Z_n V_n (g/ Z_n) = V(g)$
in $\Lb^0$, we only have to show that the collection $(Z_n V_n (g/
Z_n))_{n \in \Natural}$ of random variables is $\prob$-uniformly
integrable.

Each $V_n$ is decreasing, thus $Z_n V^+_n (g/ Z_n) \leq Z_n V^+_n(f/
Z_n)$, and $(Z_n V^+_n(f/ Z_n))_{n \in \Natural}$ is
$\prob$-uniformly integrable by assumption.

On the other hand, since $V_n(x) \geq - x$ for all $n \in \Natural$
we get $Z_n V^-_n (g/ Z_n) \leq g$. The uniform integrability of
$(Z_n V^-_n (g/ Z_n))_{n \in \Natural}$ now follows from the fact that
 $g \in \Lb^1(\prob)$.\qed

 \subsubsection{An upper semicontinuity-property of the sequence of
   the dual value functions}

We proceed here to show that for fixed $(y, r) \in \eL$ we have
\begin{equation} \label{eq: upper semicontinuity of dual}
\limsup_{n \to \infty} v_n(y, r) \leq v(y, r)
\end{equation}

With $p := y^{-1} r$, pick some $\qprob \in \Q'(p)$ such that $(Z_n
V^+_n( y \ud \qprob / \ud \prob_n))_{n \in \Natural}$ is
$\prob$-uniformly integrable (observe that
\emph{this} is where we use our \eqref{ass: UI} assumption).
Then, $V^+( y \ud \qprob / \ud \prob ) \in \Lb^1 (\prob)$, and,
according to Lemma \ref{lem: dual is inf over bounded below dual
elements}, $v(y, r) = \inf_{g \in \B (y, r, \qprob)} \expec [V
(g)]$.

\smallskip

For any $g \in \B (y, r, \qprob) \subseteq \D (y, r)$, Lemma
\ref{lem: continuity in L^1 for bounded below dual elements} implies
that $\lim_{n \to \infty} \expec [Z_n V_n (g/ Z_n)] = \expec [V
(g)]$; it follows that $\limsup_{n \to \infty} v_n(y_n, r_n) \leq
\limsup_{n \to \infty} \expec [Z_n V_n (g / Z_n)] = \expec [ V (g
)]$. Taking the infimum over all $g \in \B (y, r, \qprob)$ in the
right-hand-side of the last inequality we arrive at \eqref{eq: upper
semicontinuity of dual}.

\subsection{Limits of  sequences of primal value functions and
  marginal utility-based prices}
\label{subsec: continuity of primal and util indif prices}

In   subsection \ref{subsec: continuity of dual} above, we
established that $(v_n)_{n \in \Natural}$ converges pointwise to $v$
on $\eL$. Since all the functions involved are convex, the
convergence is uniform  on compact subsets of $\eL$. Thanks to
the strong stability properties of the family of convex functions on
finite-dimensional spaces, this fact (and this fact only) yields
convergence of the concave primal value functions, as well as the
related sub-differentials and super-differentials to the
corresponding limits. Indeed, by  Theorem 7.17, p.~252 in
\cite{RocWet98},  pointwise convergence on the interior of the
effective domain of the limiting function is equivalent to the
weaker notion of {\em epi-convergence}. In our case, primal value
functions are all defined on $\K$ and the dual value functions on
$\eL$, both of which are, in fact, open thanks to the assumption
(\ref{ass: N-TRAD}). For reader's convenience, we repeat the
definition of epi-convergence
\begin{defn}[Definition 7.1., p.~240, \cite{RocWet98}]
\label{def:epi} Let $\seq{f}$ be a sequence of lower semicontinuous
and proper convex functions defined on some Euclidean space
$\Real^d$. We say that $f_n$ {\em epi-converges} to $f$ --- and
write $f_n \toe f$ if
\begin{enumerate}
\item $\forall\, x\in\Real^d,\ \forall\, x_n\to x$, $\liminf f_n(x_n)\geq
  f(x)$;
\item $\forall\, x\in\Real^d,\ \exists\, x_n\to x$, $\limsup f_n(x_n)\leq
  f(x)$.
\end{enumerate}
\end{defn}
Epi-convergence seems to be tailor-made to interact well with
conjugation. Denoting the convex conjugation by $(\cdot)^*$,
Theorem 11.34., p.~500., in \cite{RocWet98} states that
\begin{equation}
    \nonumber
    \begin{split}
f_n\toe f\ \Leftrightarrow\ f_n^*\toe f^*,
    \end{split}
\end{equation}
as long as the functions $\seq{f}$ are proper and
lower-semicontinuous. An immediate consequence  of the above fact is
that
\begin{equation}
    \nonumber
    \begin{split}
u(x,q)=\lim_n u_n(x,q),\text{ for all }(x,q)\in\K.
    \end{split}
\end{equation}

Moreover, the functions $u_n$ as well as the limiting function $u$ are
know to be convex, so the stated pointwise convergence is, in fact,
uniform on compacts. Therefore, the following, stronger, conclusion
holds
\begin{equation}
    \nonumber
    \begin{split}
u(x_n,q_n)=\lim_n u_n(x_n,q_n),\text{ for all }(x_n,q_n)\to
(x,q)\in\K.
    \end{split}
\end{equation}

The list of pleasant properties of epi-convergence is not exhausted
yet. By Theorem 12.35., p.~551 in \cite{RocWet98}, epi-convergence of
convex functions implies the
convergence of their sub-differentials, in the
sense of graphical convergence, as  defined  below (the dimension
$d\geq 1$ of the underlying space is general, but will be applied as
$d=N+1$):
\begin{defn}[Definition 5.32., p.~166, Proposition 5.33., p.~167]
  Let $T,\seq{T}:\Real^d\rightrightarrows \Real^d$ be a sequence of
  correspondences.  We say that $T_n$ {\em graphically converges} to
  $T$, and write $T_n\tog T$, if for all $x\in\Real^d$,
\begin{equation}
    \nonumber
    \begin{split}
      \bigcup_{\set{x_n \to x}} \limsup_n T_n(x_n) \subseteq T(x)
      \subseteq \bigcup_{\set{x_n\to x}} \liminf_n T_n(x_n),
    \end{split}
\end{equation}
where $\liminf$ and $\limsup$ should be interpreted in the usual
set-theoretical sense, and the unions are taken over all sequences
$\seq{x}$ in $\Real^d$, converging to $x$.
\end{defn}

In order to combine the results mentioned above and
illustrate the notion of
graphical convergence in more familiar terms,
we state and prove the following simple observation.
\begin{prop}
\label{prop:last}
Suppose that $f_n\toe f$, and let $\seq{x}$ be a sequence in $\Real^d$
converging towards some $x\in\Real^d$. Then
\begin{equation}%
\label{equ:limsup}
    \begin{split}
\limsup_n \partial f_n(x_n) \subseteq
\partial f(x).
    \end{split}
\end{equation}
Further, let $\sup\{\norm{y}\,|\,y\in \partial f_n(x_n),\
n\in\Natural\}<\infty$.  Then, for each $\eps>0$ there exists
$n(\eps)\in\Natural$ such that for $n\geq n(\eps)$, $\partial f_n(x_n)
\subseteq \partial f(x)+\eps B$, with $B$ denoting the unit ball of
$\Real^d$.
\end{prop}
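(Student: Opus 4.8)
The plan is to prove Proposition \ref{prop:last} in two parts, leveraging the definition of graphical convergence and the theorem (12.35 in \cite{RocWet98}) that epi-convergence of convex functions forces graphical convergence of their subdifferentials.

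\textbf{First part: the limsup inclusion \eqref{equ:limsup}.} I would start by fixing a sequence $(x_n)_{n\in\Natural}$ converging to $x$ and an arbitrary $y\in\limsup_n \partial f_n(x_n)$. By the set-theoretic definition of $\limsup$, there is a subsequence $(n_k)$ and points $y_{n_k}\in\partial f_{n_k}(x_{n_k})$ with $y_{n_k}\to y$. The key is to produce from this a single sequence $(\tilde x_m)_{m\in\Natural}$ converging to $x$ along which the full sequence of subdifferentials $\partial f_m(\tilde x_m)$ captures $y$ in its $\limsup$; this is what is needed to invoke the left-hand inclusion in the definition of graphical convergence, $\bigcup_{\{\tilde x_m\to x\}}\limsup_m \partial f_m(\tilde x_m)\subseteq \partial f(x)$. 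Concretely, I would set $\tilde x_m := x_{n_k}$ for indices $m$ between consecutive subsequence terms (or simply pass to the subsequence, since graphical convergence of subdifferentials holds along any subsequence of epi-convergent functions — a subsequence of an epi-convergent sequence still epi-converges to the same limit). Since $f_n\toe f$ implies $\partial f_n\tog\partial f$ by Theorem 12.35, the inclusion $y\in\partial f(x)$ follows immediately. Because $\partial f(x)$ is closed, the entire set $\limsup_n\partial f_n(x_n)$ lands inside it.

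\textbf{Second part: the uniform $\eps$-neighborhood inclusion.} Here I would argue by contradiction, using the uniform bound $\Lambda:=\sup\{\norm{y}\mid y\in\partial f_n(x_n),\ n\in\Natural\}<\infty$ to extract convergent subsequences. Suppose the conclusion fails: then there is some $\eps_0>0$ and a subsequence $(n_k)$ together with points $y_{n_k}\in\partial f_{n_k}(x_{n_k})$ such that $y_{n_k}\notin\partial f(x)+\eps_0 B$ for every $k$. By the uniform bound, the $y_{n_k}$ lie in the compact ball of radius $\Lambda$, so after passing to a further subsequence we may assume $y_{n_k}\to y_*$ for some $y_*$ with $\norm{y_*}\leq\Lambda$. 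On the one hand, $y_*\in\limsup_n\partial f_n(x_n)\subseteq\partial f(x)$ by the first part. On the other hand, each $y_{n_k}$ has distance at least $\eps_0$ from the closed set $\partial f(x)$, and distance to a fixed closed set is continuous, so the limit $y_*$ also has distance at least $\eps_0$ from $\partial f(x)$, contradicting $y_*\in\partial f(x)$.

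\textbf{Main obstacle.} The delicate point is the bookkeeping in the first part: graphical convergence as stated quantifies over \emph{all} sequences $\tilde x_m\to x$, whereas I am handed subdifferential elements only along a subsequence $(n_k)$ of a \emph{single} prescribed sequence $(x_n)$. I must ensure that restricting to a subsequence does not break the applicability of Theorem 12.35 — this is fine because epi-convergence is inherited by subsequences, so $\partial f_{n_k}\tog\partial f$ still holds and its left inclusion applies to the subsequence $(x_{n_k})\to x$. The uniform boundedness hypothesis in the second part is essential and cannot be dropped: without it the subdifferential sets could escape to infinity, and the upper-hemicontinuity-type (Hausdorff one-sided) conclusion would fail even though the graphical-limsup inclusion \eqref{equ:limsup} always holds.
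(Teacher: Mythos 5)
Your proof is correct and follows essentially the same route as the paper: the first inclusion is read off from the graphical convergence of subdifferentials guaranteed by Theorem 12.35 of \cite{RocWet98}, and the second part is the same compactness-plus-contradiction argument (your uniform bound forces a convergent subsequence whose limit would lie both in $\partial f(x)$ and at distance at least $\eps_0$ from it). The only remark is that your subsequence bookkeeping in the first part is unnecessary, since the left-hand inclusion in the definition of graphical convergence already quantifies over the given sequence $(x_n)_{n\in\Natural}$ directly, so $\limsup_n \partial f_n(x_n)\subseteq \partial f(x)$ is immediate.
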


\begin{proof}
  The first statement follows from the definition of graphical
  convergence, and its relationship to epi-convergence. For the
  second, suppose, to the contrary, that we can find an $\eps>0$ and
  an increasing sequence $n_k\in\Natural$ such that there exist
  points $x^*_k\in \partial f_{n_k}(x_{n_k})$ such that
  $d(x^*_{k},\partial f(x))>\eps$. If $(x^*_k)_{k\in\Natural}$ has a convergent
  subsequence, then its limit $x^*_0$ has to satisfy $d(x^*_0,\partial
  f(x))\geq \eps$ --- a contradiction with (\ref{equ:limsup}).
  Therefore, there exists a subsequence of $(x^*_k)_{k\in\Natural}$, converging to
  $+\infty$ in norm. This, however, contradicts
  assumed uniform boundedness of subdifferentials.
\end{proof}

The sequences $\seq{v}$ and $\seq{u}$ converge in a pointwise
fashion, uniformly on compacts. It follows now directly from
Definition \ref{def:epi} that $\seq{u}$ and $\seq{v}$ converge in
the epi sense towards $u$ and $v$.
 As we have already mentioned above, epi convergence
implies graphical convergence of the subdifferentials.
To be able to use the additional conclusion of
Proposition \ref{prop:last}, we need to establish
uniform boundedness of the superdifferentials of the functions
$u(x_n,q_n;U_n,\prob_n)$, when $(x_n,q_n)$ live in a compact subset of
$\K$. By Theorem \ref{thm: hugkra}, these are all of the form
\begin{equation}
    \nonumber
    \begin{split}
\partial u(x_n,q_n; U_n, \prob_n)=\{y_n\}\times{
  \partial_{q} u(x_n,q_n;U_n, \prob_n)},\text{ where }
y_n=\frac{\partial}{\partial x} u(x_n,q_n;U_n, \prob_n).
    \end{split}
\end{equation}
It is an easy consequence of the second inclusion in the definition
of the graphical convergence, and the differentiability in the
$x$-direction of all functions $u$, $\seq{u}$ that $y_n\to
y=\tfrac{\partial}{\partial x} u(x,q;U, \prob)$. In particular, the
sequence $\seq{y}$ is bounded away from zero, so, in order to use
Proposition \ref{prop:last},
 it is enough to show that the
sets $\prices(x_n, q_n; U_n, \prob_n)$ of utility-based prices are uniformly
bounded. This fact follows immediately, once we recall that those
are always contained in the sets of arbitrage-free
prices, which are uniformly bounded by (S-REP). It remains to use
Proposition \ref{prop:last}
 above and remember the characterization (\ref{eq: util indif
   prices}), to complete the proof of parts (1) and (3) of our main
 Theorem \ref{thm: main}.

\subsection{Continuity of the optimal dual element and optimal wealth processes} \label{subsec: continuity of optimal wealth}

We conclude  the proof of Theorem \ref{thm: main}, tackling item
(2) on convergence of the optimal terminal wealth and the optimal
dual elements. Let  $(x_n,
q_n)_{n \in \Natural}$ with $\lim_{n \to \infty} (x_n, q_n) =: (x,
q) \in \K$ and $(y_n, r_n)_{n \in
\Natural}$ with $\lim_{n \to \infty} (y_n, r_n) =: (y, r) \in \eL$ be,
respectively,
a $\K$-valued and an $\eL$-valued sequence.

\subsubsection{Preparation}

Remember from Theorem \ref{thm: hugkra} that the optimal dual and
optimal primal elements are connected via
\[
\hat{X}_T (x_n, q_n; U_n, \prob_n) + \langle q_n, f \rangle = - V'_n
(\hat{Y}_T (y_n, r_n; V_n, \prob_n)), \textrm{ where } (y_n, r_n)
\in \partial u(x_n, q_n).
\]
If we show that $\lim_{n \to \infty} \hat{Y}_T (y_n, r_n; V_n,
\prob_n) = \hat{Y}_T (y, r; V, \prob)$ in $\Lb^0$ for all sequences
$(y_n, r_n)_{n \in \Natural}$ that are $\eL$-valued with $\lim_{n
\to \infty} (y_n, r_n) =: (y, r) \in \eL$, then the convergence of
the random variables $\hat{X}_T (x_n, q_n; U_n, \prob_n)$ to
$\hat{X}_T (x, q; U, \prob)$ in $\Lb^0$ will follow as well. Indeed,
fix a $\K$-valued sequence $(x_n, q_n)_{n \in \Natural}$ with
$\lim_{n \to \infty} (x_n, q_n) =: (x, q) \in \K$; from the upper
hemicontinuity property proved in Subsection \ref{subsec: continuity
of primal and util indif prices}, we can choose for each $n \in
\Natural$ some $(y_n, r_n) \in
\partial u(x_n, q_n)$ in such a way as to have $\lim_{n \to \infty}
(y_n, r_n) =: (y, r) \in
\partial u(x, q) \subseteq \eL$. The claim now follows easily; indeed,
$\lim_{n \to \infty} \hat{Y}_T (y_n, r_n; V_n, \prob_n) = \hat{Y}_T
(y, r; V, \prob)$ in $\Lb^0$, $\hat{Y}_T (y, r; V, \prob) >
0$, a.s., and  $(V'_n)_{n \in \Natural}$ converges uniformly
to $V'$ on compact subsets of $(0, \infty)$.

To ease notation we write $g_n := Z_n \hat{Y}_T (y_n, r_n; V_n,
\prob_n)$ and $g := \hat{Y}_T (y, r; V, \prob)$. Then $g_n \in
\D(y_n, r_n)$ satisfies $v_n(y_n, r_n) = \expec [Z_n V_n(g_n /
Z_n)]$ for each $n \in \Natural$ and $g \in \D (y, r)$ satisfies
$v(y, r) = \expec [V(g)]$. The condition $\lim_{n \to \infty}
\hat{Y}_T (y_n, r_n; V_n, \prob_n) = \hat{Y}_T (y, r; V, \prob)$ in
$\Lb^0$ that we need to prove translates to $\lim_{n \to \infty}
(g_n / Z_n) = g$ in $\Lb^0$.  Assume that an arbitrary subsequence
(whose indices are not relabeled) has already been extracted from
$(g_n / Z_n)_{n\in\Natural}$. It suffices to show that $\lim_{k
\to \infty} (g_{n_k} / Z_{n_k}) = g$ in $\Lb^0$ along some further
\emph{subsequence} $(g_{n_k} / Z_{n_k})_{k \in \Natural}$.

We now make a further step that --- although seemingly confusing ---
will prove useful in the sequel of our proof. With $p := y^{-1} r$,
we pick some $\qprob \in \Q(p)$ such that $f := y \ud \qprob / \ud
\prob$ satisfies $V^+ ( f ) \in \Lb^1 (\prob)$. For each $n \in
\Natural$ let $f_n := n^{-1} f + (1 - n^{-1}) g$. Note that $f_n
\in \B(y,r,\qprob)$ for all $n \in \Natural$, with  $\B(y,r,\qprob)$
as in \eqref{eq: bounded from below dual elements},  and that
$\lim_{n \to \infty} f_n = g$ in $\Lb^0$.

For any $m
\in \Natural$ define
\begin{equation} \label{eq: C_m}
C_m := \{ (a, b) \in \Real^2 \ | \ 1/m \leq a \leq m, \ 1/m \leq b
\leq m, \textrm{ and } |a - b| > 1 / m \}.
\end{equation}
Combining the discussion above with the facts that $\lim_{n \to
\infty} Z_n = 1$ in $\Lb^0$ and that both sequences $(g_n)_{n \in
\Natural}$ and $(f_n)_{n \in \Natural}$ are bounded in $\Lb^0$, we
conclude that in order to prove
that $ \lim_{n \to \infty} (g_n / Z_n) = g\text{ in
  $\Lb^0$}$, we need to establish
the following claim:
\begin{claim}
\label{cla:only}
There exist a strictly increasing
sequence $\{n_k\}_{k\in\Natural}$ so that the subsequences
$(g_{n_m})_{m \in \Natural}$ and $(f_{n_m})_{m \in \Natural}$ of
$(g_n)_{n \in \Natural}$ and $(f_n)_{n \in \Natural}$ respectively, satisfy
\begin{equation} \label{eq: conergence of optimal dual}
\lim_{m \to \infty} \prob_{n_m} \left[\left(
g_{n_m}/Z_{n_m},f_{n_m}/Z_{n_m}\right) \in C_m \right] = 0.
\end{equation}
\end{claim}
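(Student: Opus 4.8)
The plan is to exploit the optimality of $g_n$ together with the \emph{strict} convexity of the dual functions, localized on the sets $C_m$, and to feed in the convergence of the dual value functions established above. Throughout I would keep the measure $\qprob$ and the random variable $f = y\,\ud\qprob/\ud\prob$ of the preceding paragraph fixed, and for $j \in \Natural$ write $f_j := j^{-1} f + (1 - j^{-1}) g$, so that $f_j \in \B(y,r,\qprob) \subseteq \D(y,r)$, $f_j \in \Lb^1$, and $f_j \to g$ in $\Lb^0$ as $j \to \infty$. Since $f_j \ge j^{-1} f = (y/j)\,\ud\qprob/\ud\prob$, assumption \eqref{ass: UI} --- used with the parameter $y/j$, which is precisely why the quantifier ``$\forall\, y>0$'' is present --- makes $(Z_n V_n^+(f_j/Z_n))_n$ $\prob$-uniformly integrable; Lemma \ref{lem: continuity in L^1 for bounded below dual elements} then gives $\lim_n \expec[Z_n V_n(f_j/Z_n)] = \expec[V(f_j)]$ for each fixed $j$. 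Finally, convexity of $V$ yields $\expec[V(f_j)] \le j^{-1}\expec[V(f)] + (1-j^{-1})\,v(y,r)$, while $f_j \in \D(y,r)$ forces $\expec[V(f_j)] \ge v(y,r)$; hence $\gamma_j := \expec[V(f_j)] - v(y,r) \ge 0$ and $\gamma_j \to 0$.

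For the strict-convexity input I would record that, since $V$ is strictly convex and $V_n \to V$ uniformly on compact subsets of $(0,\infty)$, for each $m$ there is $\eta_m > 0$ and an index beyond which
\[
\tfrac12 V_n(a) + \tfrac12 V_n(b) - V_n\!\pare{\tfrac{a+b}{2}} \ge \eta_m, \qquad \text{for all } (a,b) \in C_m .
\]
Setting $w_n := \tfrac12 g_n + \tfrac12 f_j$, the (affine) structure of the deflator constraints gives $w_n \in \D(y_n', r_n')$ with $(y_n', r_n') := \pare{\tfrac{y_n + y}{2}, \tfrac{r_n + r}{2}} \to (y,r)$, so that $\expec[Z_n V_n(w_n/Z_n)] \ge v_n(y_n', r_n')$. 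Multiplying the displayed inequality (applied to $a = g_n/Z_n$, $b = f_j/Z_n$) by $Z_n$, taking $\prob$-expectations, and using $\expec[Z_n \indic_A] = \prob_n[A]$ together with the optimality $\expec[Z_n V_n(g_n/Z_n)] = v_n(y_n, r_n)$, I would arrive at
\[
\eta_m\, \prob_n\!\bra{(g_n/Z_n,\, f_j/Z_n) \in C_m} \le \tfrac12 v_n(y_n, r_n) + \tfrac12 \expec[Z_n V_n(f_j/Z_n)] - v_n(y_n', r_n').
\]
Letting $n \to \infty$ and invoking $v_n(y_n, r_n) \to v(y,r)$, $v_n(y_n', r_n') \to v(y,r)$ (from the convergence of the dual value functions proved above) together with $\expec[Z_n V_n(f_j/Z_n)] \to \expec[V(f_j)]$, the right-hand side tends to $\tfrac12 \gamma_j$; that is, $\limsup_n \prob_n[(g_n/Z_n, f_j/Z_n) \in C_m] \le \gamma_j/(2\eta_m)$ for every fixed $m$ and $j$.

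It remains to compress these estimates into a single subsequence, and I expect this diagonalization to be the only genuine obstacle. The delicate point is that the claim pairs $g_{n_m}/Z_{n_m}$ with $f_{n_m}/Z_{n_m}$, i.e.\ the approximation index is \emph{forced} to equal the running index $n_m$; one cannot simply put $j=n$ in the bound above, because the lower bound $n^{-1}f \to 0$ destroys the uniform integrability needed in Lemma \ref{lem: continuity in L^1 for bounded below dual elements} (indeed $V_n^+$ blows up near $0$). I would therefore keep the two indices apart and bridge them at the end. Concretely: given $m$, first choose $j_m \ge m^2$ so large that $\gamma_{j_m}/(2\eta_{2m}) < 1/(2m)$, and then (using the $\limsup$ bound with $2m$ in place of $m$) pick $n_m$, strictly increasing with $n_m \ge j_m$, so that $\prob_{n_m}[(g_{n_m}/Z_{n_m}, f_{j_m}/Z_{n_m}) \in C_{2m}] < 1/m$. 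To transfer from $f_{j_m}$ to $f_{n_m}$, note $f_{n_m} - f_{j_m} = (n_m^{-1} - j_m^{-1})(f - g)$, so on the event $\set{|f_{n_m}/Z_{n_m} - f_{j_m}/Z_{n_m}| \le 1/(2m)}$ one has the inclusion $\set{(g_{n_m}/Z_{n_m},\, f_{n_m}/Z_{n_m}) \in C_m} \subseteq \set{(g_{n_m}/Z_{n_m},\, f_{j_m}/Z_{n_m}) \in C_{2m}}$; the complementary probability is at most $\prob_{n_m}[|f-g|/Z_{n_m} > j_m/(2m)] \le \prob_{n_m}[|f-g|/Z_{n_m} > m/2]$, which tends to $0$ by the $\prob$-uniform integrability of $(Z_n)_n$ and $Z_{n_m} \to 1$. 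Adding the two bounds gives $\prob_{n_m}[(g_{n_m}/Z_{n_m}, f_{n_m}/Z_{n_m}) \in C_m] \to 0$, which is exactly \eqref{eq: conergence of optimal dual}.
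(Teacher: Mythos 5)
Your proposal is correct and follows essentially the same route as the paper: the strict-convexity gap $\beta_m$ (your $\eta_m$) on $C_m$, the midpoint $(g_n+f_k)/(2Z_n)\in\D\bigl(\tfrac{y_n+y}{2},\tfrac{r_n+r}{2}\bigr)$, optimality of $g_n$, Lemma \ref{lem: continuity in L^1 for bounded below dual elements} for each fixed approximation index, and the convergence of the dual value functions. The only difference is that the paper dismisses the final step as ``a matter of subsequence manipulation,'' whereas you carry it out explicitly (decoupling the indices $j_m$ and $n_m$, widening $C_m$ to $C_{2m}$, and controlling $\prob_{n_m}[\,|f-g|/Z_{n_m}>m/2\,]$), which is a correct and welcome filling-in of that gap.
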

The above clarifies the reason why the sets $C_m$, $m \in \Natural$ of \eqref{eq: C_m} were introduced; in fact, this trick is a more elaborate version of the method used in the proof of Lemma A.1 of \cite{delsch94}.
\begin{rem}
For a sequence $(A_n)_{n \in \Natural}$ of $\F$-measurable sets,
$\lim_{n \to \infty} \prob_n [A_n] = 0$ is equivalent to $\lim_{n
\to \infty} Z_n \indic_{A_n} = 0$ in $\Lb^0$ (combining the
$\Lb^1(\prob)$-convergence of the last sequence with $\prob$-uniform
integrability of $(\prob_n)_{n \in \Natural}$) which, in view of the
fact $\lim_{n \to \infty} Z_n = 1$ in $\Lb^0$,  is equivalent to
$\lim_{n \to \infty} \indic_{A_n} = 0$ in $\Lb^0$, or in other words
that $\lim_{n \to \infty} \prob [A_n] = 0$. This justifies the use
of ``$\prob_{n_m}$'' instead of ``$\prob$'' in \eqref{eq: conergence
of optimal dual}.
\end{rem}

\subsubsection{Proof of Claim \ref{cla:only}}

For any $m \in \Natural$, the strict convexity of $V$ implies the
existence of some $\beta_m > 0$ such that for all $(a, b) \in (0,
\infty)^2$ we have
\[
V \Big( \frac{a + b}{2} \Big) \leq  \frac{V(a) + V(b)}{2} - \beta_m
\indic_{C_m} (a, b),\text{ for the set $C_m$ of \eqref{eq: C_m}.
}
\]
\emph{Uniform}
convergence of $(V_n)_{n \in \Natural}$ to $V$ on compact subsets of
$(0, \infty)$  implies that (with a possible lower, but still
strictly positive, choice of $\beta_m$) we still have
\[
V_n \Big( \frac{a + b}{2} \Big) \leq  \frac{V_n(a) + V_n(b)}{2} -
\beta_m \indic_{C_m} (a, b),
\]
for all $n \in \Natural$ and $(a, b) \in (0, \infty)^2$. Setting $a =
g_n / Z_n$, $b = f_k / Z_n$, multiplying both sides of the previous
inequality with $Z_n$, and taking expectation with respect to $\prob$,
one gets
\begin{equation}
    \nonumber
    \begin{split}
      \beta_m \prob_n \Big[ \Big(\frac{g_n}{Z_n},\frac{f_k}{Z_n} \Big)
      \in C_m \Big] &\leq \frac{1}{2} \expec \Big[ Z_n V_n
      \Big(\frac{g_n}{Z_n} \Big) \Big] + \frac{1}{2} \expec \Big[ Z_n
      V_n \Big(\frac{f_k}{Z_n} \Big) \Big] -
      \expec \Big[ Z_n V_n \Big(\frac{g_n + f_k}{2 Z_n} \Big) \Big]\\
      &\leq \frac{1}{2} v_n (y_n, r_n) + \frac{1}{2} \expec \Big[ Z_n
      V_n \Big(\frac{f_k}{Z_n} \Big) \Big] - v_n \Big( \frac{y_n +
        y}{2}, \frac{r_n + r}{2} \Big),
    \end{split}
\end{equation}
where for the third term of the last inequality we have used that
fact that
\[
\frac{g_n + f_k}{2 Z_n} \ \in \ \D \Big( \frac{y_n + y}{2},
\frac{r_n + r}{2} \Big).
\]
Invoking Lemma \ref{lem: continuity in L^1 for bounded below dual
  elements}, we know that $\lim_{n \to \infty} \expec [Z_n V_n(f_k /
Z_n)] = \expec [V(f_k)]$ for all fixed $k \in \Natural$.  Furthermore,
the proof of Lemma \ref{lem: dual is inf over bounded below dual
  elements} shows that $\lim_{k \to \infty} \expec [V(f_k)] = \expec
[V(g)] = v(y, r)$. Using also the uniform convergence (on compact
subsets of $\eL$) of $(v_n)_{n \in \Natural}$ to $v$, we see that
we can choose $k_m$ and $n_m$ large enough so that
$\prob_{n_m} [ Z_{n_m}^{-1} ({g_{n_m}},f_{k_m}) \in C_m] \leq 1/m$. It
is a matter of subsequence manipulation to show that one can,
in fact, choose a universal strictly increasing sequence $n_m=k_m$,
$m\in\Natural$ with all the desired properties.  This proves
\eqref{eq: conergence of optimal dual}, and concludes the proof of our
main Theorem \ref{thm: main}.

\bibliographystyle{siam}
\bibliography{stability}
\end{document}